\newtheorem{theorem}{Theorem}
\newtheorem{lemma}[theorem]{Lemma}
\newtheorem{remark}[theorem]{Remark}
\newtheorem{corollary}[theorem]{Corollary}
\newtheorem{proposition}[theorem]{Proposition}
\newtheorem{example}[theorem]{Example}
\newcommand{\Cov}{\operatorname{Cov}}
\newcommand{\pd }{\mathbf{S}^+}
\newcommand{\lsc }{l.s.c.\ }
\newcommand{\usc }{u.s.c.\ }
\newcommand{\law }{\operatorname{law}}
\newcommand{\psd }{\mathbf{S}_0^+}
\newcommand{\sym }{\mathbf{S}}
\newcommand{\Tr}{\operatorname{Tr}}
\newcommand{\HS}{\operatorname{HS}}
\newcommand{\id}{\operatorname{id}}
\newcommand{\EE}{\mathbb{E}}
\newcommand{\R}{\mathbb{R}}
\title{Entropy Inequalities and Gaussian Comparisons}
\author{Efe Aras and Thomas A.~Courtade\\University of California, Berkeley}
\date{~~}
\begin{document}

\maketitle

\begin{abstract}
We establish a general class of entropy inequalities that take the concise form of Gaussian comparisons.  The main result unifies many classical and recent results, including the Shannon--Stam inequality, the Brunn--Minkowski inequality, the Zamir--Feder inequality, the Brascamp--Lieb and Barthe inequalities, the Anantharam--Jog--Nair inequality, and others. 
\end{abstract}

\section{Introduction}
Entropy inequalities have been a core part of  information theory since its inception; their development  driven largely by the role they serve in impossibility results for coding theorems.  Many basic  inequalities enjoyed by entropy, such as subadditivity, boil down to convexity of the logarithm, and hold in great generality.  Others are decidedly more analytic in nature, and may be regarded as capturing some deeper geometric property of the specific spaces on which they hold.  In the context of Euclidean spaces, a notable example of the latter is the Shannon--Stam entropy power inequality (EPI), stated in Shannon's original 1948 treatise \cite{shannon48} and   later proved by Stam \cite{stam59}.  Another example is the Zamir--Feder inequality \cite{ZamirFeder},  which can be stated as follows: Let $X = (X_1, \dots, X_k)$ be a random vector in $\mathbb{R}^k$  with independent coordinates $(X_i)_{i=1}^k$.   If $Z = (Z_1, \dots, Z_k)$ is a Gaussian vector with independent coordinates $(Z_i)_{i=1}^k$ and entropies satisfying $h(Z_i) = h(X_i)$, $1\leq i \leq k$, then for any linear map $B: \mathbb{R}^k \to \mathbb{R}^n$, we have
\begin{align}
 h( B X)\geq  h( B Z). \label{eq:ZamirFederIneq}
\end{align}

Evidently, \eqref{eq:ZamirFederIneq} takes the form of a Gaussian comparison; so, too, does the EPI.  The goal of this paper is to show that such Gaussian comparisons hold in great generality, thus unifying a large swath of known and new information-theoretic and geometric inequalities.   For example, we'll see that \eqref{eq:ZamirFederIneq} holds   when the $X_i$'s are random vectors of different dimensions, and, in fact, continues to hold even when the independence assumption is suitably relaxed.  As another example, we'll see how the EPI and the Brunn--Minkowski inequality emerge as different endpoints of a suitable Gaussian comparison, thus giving a clear and precise explanation for their formal similarity.

This paper is organized as follows.  Section \ref{sec:MainResult} presents the main result and a few short examples; Section \ref{sec:proofs} is dedicated to the proof.  Sections \ref{sec:multimarginal} and \ref{sec:saddle} give further applications, and Section \ref{sec:closing} delivers closing remarks.

\section{ Main Result} \label{sec:MainResult}
Recall  that a Euclidean space $E$  is a finite-dimensional Hilbert space over the real field, equipped with Lebesgue measure.  For a probability measure $\mu$ on $E$, absolutely continuous with respect to Lebesgue measure, and a random vector $X\sim \mu$, we define the Shannon entropy
$$
h(X) \equiv h(\mu) :=-\int_E \log\left( \frac{d\mu}{dx}\right)d\mu, 
$$
provided the integral exists.  If $\mu$ is not absolutely continuous with respect to Lebesgue measure, we adopt the convention $h(\mu):=-\infty$.  We let $\mathcal{P}(E)$ denote the set of  probability measures on $E$ having finite entropies and second moments.   When there is no cause for ambiguity, we adopt the  hybrid notation where a random vector $X$ and its law $\mu$ are denoted interchangeably.  So, for example, writing $X\in \mathcal{P}(E)$ means that $X$ is a random vector taking values in $E$, having finite entropy and finite second moments.  We let $\mathcal{G}(E)$ denote the subset of $\mathcal{P}(E)$ that consists of Gaussian measures.  

The following notation will be reserved throughout.  We consider a Euclidean space $E_0$ with a fixed orthogonal decomposition $E_0 = \oplus_{i=1}^k E_i$.  There are no constraints on the dimensions of these spaces, other than that they are finite (by definition of Euclidean space), and $\dim(E_0) = \sum_{i=1}^k \dim(E_i)$ (by virtue of the stated decomposition).  We let $\mathbf{d} = (d_j)_{j=1}^m$ be a collection of positive real numbers, and $\mathbf{B}=(B_j)_{j=1}^m$ be a collection of linear maps $B_j : E_0 \to E^j$, with common domain $E_0$ and respective codomains equal to  Euclidean spaces $E^1, \dots, E^m$.  Aside from linearity, no further properties of the maps in $\mathbf{B}$ are assumed. For given random vectors $X_i\in \mathcal{P}(E_i)$, $1\leq i \leq k$, we let $\Pi(X_1, \dots, X_k)$ denote the corresponding set of couplings on $E_0$.   That is, we write $X\in \Pi(X_1, \dots, X_k)$ to indicate that $X$ is a random vector taking values in $E_0$ with 
$$
\pi_{E_i}(X) \overset{law}{=} X_i, ~~1\leq i\leq k,
$$
where $\pi_{E_i} : E_0 \to E_i$ is the canonical projection. 

For $X\in \Pi(X_1,\dots, X_k)$ and $S\subset \{1,\dots,k\}$, we define the {${S}$-correlation}\footnote{The $S$-correlation  $I_S$ seems to have no generally agreed-upon name, and has been called different things in the literature.  Our choice of terminology reflects that of Watanabe \cite{Watanabe}, who used the term {\it total correlation} to describe    $I_S$ when $S=\{1,\dots,k\}$. However, it might also be called $S$-information, to reflect the ``multi-information" terminology preferred by some (see, e.g.,   \cite{CsiszarKorner}).}
$$
I_S(X) := \sum_{i\in S}h(X_i) - h( \pi_{S}(X) ),
$$
where we let $\pi_{S}$ denote the canonical projection of $E_0$ onto $\oplus_{i\in S}E_i$.   To avoid ambiguity, we adopt the convention that $I_{\emptyset}(X) = 0$.  Observe that that $I_S$ is the relative entropy between the law of $\pi_{S}(X)$ and the product of its marginals, so is always nonnegative.  

For a given {constraint function} $\nu : 2^{\{1,\dots, k\}} \to [0,+\infty]$, and  $X_i\in \mathcal{P}(E_i)$, $1\leq i \leq k$, we can now define the  set of 
{\bf correlation-constrained couplings} 
\begin{align*}
&\Pi(X_1, \dots, X_k ; \nu) :=   \big\{ X \in \Pi(X_1, \dots, X_k) :  I_S(X)\leq  \nu(S) \mbox{~for each~} S \in    2^{\{1,\dots, k\} } \big\}.
\end{align*}

With   notation established, our main result is the following. 

\begin{theorem}\label{thm:GaussianComparisonConstrained}  Fix   $(\mathbf{d},\mathbf{B})$ and  $\nu : 2^{\{1,\dots, k\}} \to [0,+\infty]$.  For any  $X_i  \in \mathcal{P}(E_i)$, $1\leq i \leq k$,  there exist  $Z_i \in \mathcal{G}(E_i)$ with $h(Z_i)= h(X_i)$, $1\leq i\leq k$ satisfying
\begin{align}
\max_{X\in \Pi(X_1, \dots, X_k;\nu)}\sum_{j=1}^m d_j h(B_j X) \geq %
\max_{Z\in \Pi(Z_1, \dots, Z_k;\nu)}\sum_{j=1}^m d_j h(B_j Z).
 \label{eq:maxEntComparisonConstrained}
\end{align}
\end{theorem}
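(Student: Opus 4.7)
The plan is to prove the inequality by a heat-flow/Fisher-information argument: construct a one-parameter family of perturbations that interpolates between the given distributions and a Gaussian limit, show that the gap between the two sides of \eqref{eq:maxEntComparisonConstrained} is monotone along this family, and identify the limiting gap as zero.

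I would first reduce to a regular setting in which each $X_i$ has a smooth density and finite Fisher information, by convolving with an $\varepsilon$-small Gaussian and then taking $\varepsilon \downarrow 0$, using (semi)continuity of the entropies involved under weak convergence with second-moment control. For $t \geq 0$, set $X_i^t := X_i + \sqrt{t}\, G_i$, where $G_i \in \mathcal{G}(E_i)$ are independent standard Gaussians on $E_i$, and let $Z_i^t \in \mathcal{G}(E_i)$ be the isotropic Gaussian on $E_i$ with variance determined by the entropy-matching requirement $h(Z_i^t) = h(X_i^t)$.  Define
\[
G(t) := \max_{X\in \Pi(X_1^t,\dots,X_k^t;\nu)}\sum_{j=1}^m d_j h(B_j X) \;-\; \max_{Z\in \Pi(Z_1^t,\dots,Z_k^t;\nu)}\sum_{j=1}^m d_j h(B_j Z).
\]
Attainment of both maxima follows from concavity of the joint entropy, second-moment bounds along the perturbation, lower semicontinuity of the correlations $I_S$ under weak convergence, and Prokhorov tightness. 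Since $h(X_i^t) \sim \frac{\dim(E_i)}{2}\log t$ as $t \to \infty$, the isotropic $Z_i^t$ has variance $\sim t$; hence $X_i^t/\sqrt{t}$ and $Z_i^t/\sqrt{t}$ both converge in law and in entropy to the standard Gaussian $G_i$. After extracting the common logarithmic term $\bigl(\tfrac12\sum_j d_j \dim(E^j)\bigr)\log t$, both maxima converge to the Gaussian value, so $G(t) \to 0$. It thus suffices to prove that $G$ is non-increasing.

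To establish monotonicity, I would differentiate both maxima in $t$.  Applying the envelope theorem at a maximizer $X^{*,t}$ on the LHS, together with the multivariate de Bruijn identity, the derivative $\frac{d}{dt}\sum_j d_j h(B_j X^{*,t})$ reduces to a trace expression in the Fisher information matrices of the marginals $B_j X^{*,t}$ paired with the operators $B_j B_j^{\top}$. The analogous reduction on the Gaussian side yields the same expression with inverse covariance matrices in place of Fisher informations. Comparing the two via a matrix Fisher–information inequality in the spirit of Zamir–Feder and matrix Brascamp–Lieb then gives $G'(t) \leq 0$.

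The main obstacle is handling the correlation constraints $I_S(X)\leq \nu(S)$ inside the envelope/KKT computation. These constraints activate Lagrange multipliers at the maximizer and impose a non-trivial block structure on the Fisher information matrix of $X^{*,t}$ that couples the $X_i^{t}$'s across the blocks $E_i$; one must verify that the same structure arises on the Gaussian side and is compatible with the matrix Fisher inequality used to compare derivatives.  A secondary technical point is showing that the optimal value is differentiable, or at least absolutely continuous, in $t$, so that the monotonicity conclusion is legitimate; this requires checking that the set of optimizers depends upper-hemicontinuously on $t$ and that the envelope theorem applies uniformly, both of which should follow from the compactness and concavity already invoked for attainment.
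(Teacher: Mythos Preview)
Your approach has a fatal gap: by fixing $Z_i^t$ to be the \emph{isotropic} Gaussian on $E_i$ with matching entropy, you are attempting to prove a statement strictly stronger than the theorem, and that stronger statement is false. The theorem only asserts existence of \emph{some} Gaussians $Z_i$, and the correct choice depends on $(\mathbf{d},\mathbf{B})$ globally, not just on $h(X_i)$. Already with $k=m=1$, $E_1=\mathbb{R}^2$, $B_1$ the projection onto the first coordinate, and $X_1\sim N\bigl(0,\operatorname{diag}(\epsilon,1/\epsilon)\bigr)$ for small $\epsilon>0$, your isotropic choice $Z_1^0=N(0,I_2)$ gives $h(B_1 X_1)=\tfrac12\log(2\pi e\,\epsilon)<\tfrac12\log(2\pi e)=h(B_1 Z_1^0)$, so $G(0)<0$. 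Along the heat flow one computes directly $G(t)=\tfrac14\log\bigl((\epsilon+t)/(1/\epsilon+t)\bigr)$, which is strictly \emph{increasing} to $0$; hence the claimed monotonicity $G'(t)\le 0$ is simply false here, and no matrix Fisher-information inequality can deliver it. (The theorem itself is of course true in this example: take $Z_1=X_1$.)

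The paper's argument is of a completely different nature and produces the $Z_i$ as a by-product of an optimization rather than by an a priori formula. It first treats the unconstrained case $\nu\equiv+\infty$: one maximizes over the exponent vector $\mathbf{c}$ in the forward--reverse Brascamp--Lieb inequality $\sum_i c_i h(X_i)\le \max_{X\in\Pi}\sum_j d_j h(B_jX)+D_g(\mathbf{c},\mathbf{d},\mathbf{B})$, and evaluates $\min_{\mathbf{c}} D_g$ via a Sion-type minimax theorem on the geodesically convex space $\prod_i\pd(E_i)$. The witnessing Gaussians are the (near-)minimizers of this saddle problem and are generically anisotropic. The constrained case is then deduced from the unconstrained one by Lagrange multipliers on the $I_S$ constraints together with another application of Sion's theorem. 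If you wish to salvage a heat-flow route, the $Z_i^t$ would have to be selected through a variational principle coupled to $(\mathbf{d},\mathbf{B})$ and allowed to be anisotropic; it is not clear how to thread such a moving target through a de~Bruijn/envelope computation.
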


\begin{remark}
The special case where $\dim(E_i) = 1$ for all $1\leq i \leq k$ appeared in the preliminary work \cite{ArasCourtadeISIT2021} by the authors.
\end{remark}

Let us give the two brief examples promised in the introduction; further applications are discussed in Sections \ref{sec:multimarginal} and \ref{sec:saddle}.   First, observe that when $m=1$, $\nu\equiv 0$ and $\dim(E_i)=1$ for all $1\leq i \leq k$,  we recover the Zamir--Feder inequality \eqref{eq:ZamirFederIneq}.  Indeed, taking $\nu \equiv 0$ renders the set of couplings equal to the singleton consisting of the independent coupling, and the one-dimensional nature of the $E_i$'s means that the variances of the $Z_i$'s are fully determined by the entropy constraints.  Hence,  it is clear that Theorem \ref{thm:GaussianComparisonConstrained} generalizes the Zamir--Feder inequality \eqref{eq:ZamirFederIneq} in the directions noted in the introduction.  That is, it continues to hold in the case where the $X_i$'s are multidimensional, and when the independence assumption is relaxed in a suitable manner. 

As a second and slightly more substantial example, we explain the connection between the EPI and the Brunn--Minkowski inequality alluded to in the introduction. Denote the {entropy power} of $X\in \mathcal{P}(\mathbb{R}^n)$ by
$$
N(X):=  e^{2 h(X)/n}  .
$$
For a coupling $X=(X_1,X_2)$, note that the  {mutual information} $I(X_1;X_2)$ is equal to  $I_S(X)$ with $S=\{1,2\}$.

\begin{corollary}\label{thm:depEPI}
For any  $X_1,X_2   \in \mathcal{P}(\R^n)$  and   $\zeta \in [0,+\infty]$,  it holds that
\begin{align}
N(X_1) + N(X_2)  + &2 \sqrt{(1 - e^{- 2 \zeta/n }) N(X_1)N(X_2)}   
\leq \!\!\! \max_{ \substack{X_1,X_2 :\\ I(X_1;X_2)\leq \zeta} } \!\!\! N(X_1+X_2)  , \label{eq:depEPI} %
\end{align}
where the maximum is over couplings of $X_1,X_2$ such that  $I(X_1;X_2)\leq \zeta$.  
Equality holds   for  Gaussian $X_1, X_2$ with proportional covariances.
\end{corollary}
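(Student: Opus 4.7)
The plan is to specialize Theorem~\ref{thm:GaussianComparisonConstrained} to this setting and then lower-bound the resulting Gaussian comparison by evaluating a single judicious coupling. I would apply the theorem with $k=2$, $E_1=E_2=\R^n$, $m=1$, $d_1=1$, $B_1(x_1,x_2)=x_1+x_2$, and constraint function $\nu(\{1,2\})=\zeta$. This yields $Z_i\in\mathcal{G}(\R^n)$ with $h(Z_i)=h(X_i)$ (hence $N(Z_i)=N(X_i)$) such that $\max_X h(X_1+X_2)\geq\max_Z h(Z_1+Z_2)$; since $N(\cdot)=e^{2h(\cdot)/n}$ is monotone increasing, the same comparison holds for entropy powers. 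It therefore suffices, for arbitrary Gaussian marginals $Z_i$ with the prescribed entropies, to exhibit a single correlation-constrained coupling whose sum has entropy power at least the LHS of \eqref{eq:depEPI}.

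Let $K_i=\Cov(Z_i)$, so $\det(K_i)^{1/n}=N(X_i)/(2\pi e)$. I would use the Gaussian coupling with cross-covariance $C:=\rho\,G(K_1,K_2)$, where $G(K_1,K_2):=K_1^{1/2}(K_1^{-1/2}K_2 K_1^{-1/2})^{1/2}K_1^{1/2}$ is the symmetric matrix geometric mean and $\rho:=\sqrt{1-e^{-2\zeta/n}}$. The identity $G\,K_1^{-1}G=K_2$ quickly gives that the joint covariance is positive semidefinite, that $I(Z_1;Z_2)=-\tfrac{n}{2}\log(1-\rho^2)=\zeta$ (so the coupling is admissible), and that $Z_1+Z_2$ is Gaussian with covariance $K_1+K_2+2\rho G$. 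Diagonalizing $A:=K_1^{-1/2}K_2 K_1^{-1/2}$ with eigenvalues $\la_1,\ldots,\la_n$, one computes $N(Z_1+Z_2)=2\pi e\cdot\det(K_1)^{1/n}\bigl(\prod_i(1+\la_i+2\rho\sqrt{\la_i})\bigr)^{1/n}$.

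The technical heart of the proof is the scalar inequality $\bigl(\prod_i(1+\la_i+2\rho\sqrt{\la_i})\bigr)^{1/n}\geq 1+g+2\rho\sqrt{g}$ with $g=(\prod_i\la_i)^{1/n}$; granting this, the identity $\det(K_1)^{1/n}\cdot g=\det(K_2)^{1/n}$ together with multiplication by $2\pi e$ yields $N(Z_1+Z_2)\geq N(X_1)+N(X_2)+2\rho\sqrt{N(X_1)N(X_2)}$, which is exactly the desired bound. I would prove this inequality via Jensen's inequality applied to $\phi(s):=\log(1+2\rho e^s+e^{2s})$ at the sample $s_i=\tfrac{1}{2}\log\la_i$; the main obstacle is verifying convexity of $\phi$ for $\rho\geq 0$, which a direct second-derivative calculation settles through $\phi''(s)=2e^s(\rho e^{2s}+2e^s+\rho)/(1+2\rho e^s+e^{2s})^2\geq 0$. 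For the equality case, jointly Gaussian $X_1,X_2$ with proportional covariances give $K_2=\la K_1$ and $A=\la I$, so all $\la_i$ coincide and Jensen is tight; a short max-entropy/symmetry argument then confirms the matching upper bound on the max in \eqref{eq:depEPI}.
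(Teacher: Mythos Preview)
Your proposal is correct and follows the same overall strategy as the paper: apply Theorem~\ref{thm:GaussianComparisonConstrained} with the stated choices to reduce to a Gaussian problem, then exhibit a single admissible Gaussian coupling whose sum already has large enough entropy power. The two proofs diverge only in the execution of this last step. The paper uses the coupling $Z_1 = \rho\,\Sigma_1^{1/2}\Sigma_2^{-1/2} Z_2 + (1-\rho^2)^{1/2} W$, giving cross-covariance $\rho\,\Sigma_1^{1/2}\Sigma_2^{1/2}$, and then invokes Minkowski's determinant inequality to lower-bound $\det(\Sigma_1+\Sigma_2+\rho\,\Sigma_1^{1/2}\Sigma_2^{1/2}+\rho\,\Sigma_2^{1/2}\Sigma_1^{1/2})^{1/n}$. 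You instead take cross-covariance $\rho\,K_1\#K_2$ (the matrix geometric mean), which is symmetric and makes the Schur-complement verification of positive semidefiniteness and the mutual-information computation especially clean via $G K_1^{-1} G = K_2$; your finishing inequality is a direct Jensen argument for the explicitly verified convex function $\phi(s)=\log(1+2\rho e^s+e^{2s})$. Both routes yield the same bound; yours is a bit more self-contained (no appeal to Minkowski in a form involving non-PSD summands), while the paper's is marginally shorter once Minkowski is granted. For the equality case your sketch matches the paper's level of detail.
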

\begin{proof}
We apply Theorem \ref{thm:GaussianComparisonConstrained} with  $E_1= E_2=\R^n$ and   $\nu(\{1,2\}) =   \zeta$ to give existence of Gaussian $Z_1,Z_2$ satisfying $N(Z_i)=N(X_i)$ and 
$$
\max_{ \substack{(X_1,X_2) \in \Pi(X_1,X_2) : \\I(X_1;X_2)\leq \zeta} } N(X_1+X_2)
\geq 
 \max_{ \substack{(Z_1,Z_2) \in \Pi(Z_1,Z_2) : \\I(Z_1;Z_2)\leq \zeta} } N(Z_1+Z_2).
$$
Now, suppose  $Z_i\sim N(0,\Sigma_{i})$, $i\in \{1,2\}$  and  
consider the coupling 
$$
Z_1 = \rho \Sigma^{1/2}_{1} \Sigma^{-1/2}_{2} Z_2 + (1-\rho^2)^{1/2} W,
$$
where $W\sim N(0,\Sigma_{1})$ is independent of $Z_2$, and $\rho := (1 - e^{-2 \zeta/n})^{1/2}$.  This    ensures $I(Z_1;Z_2) = \zeta$, and
\begin{align*}
N(Z_1+Z_2) &= (2\pi e) \det( \Sigma_{1}+\Sigma_{2} + \rho \Sigma_{1}^{1/2}\Sigma_{2}^{1/2} +  \rho \Sigma_{2}^{1/2}\Sigma_{1}^{1/2} )^{1/n}\\
&\geq (2\pi e) \left( \det( \Sigma_{1} )^{1/n} + \det( \Sigma_{2} )^{1/n} + 2 \rho  \det( \Sigma^{1/2}_{1} )^{1/n}\det( \Sigma^{1/2}_{2} )^{1/n}\right)\\
&=N(X_1) + N(X_2)  + 2 \sqrt{(1 - e^{-2 \zeta/n}) N(X_1)N(X_2)}, 
\end{align*}
where the inequality follows by Minkowski's determinant inequality.   It is easy to see that we have equality throughout if $X_1, X_2$ are Gaussian with proportional covariances.
\end{proof}

\begin{remark}
Theorem \ref{thm:depEPI} may be considered as an extension of the EPI that holds for  certain dependent random variables; it appeared in the preliminary work \cite{ArasCourtadeISIT2021} by the authors.  We remark that Takano \cite{takano1995inequalities} and Johnson \cite{johnson2004conditional} have established that the EPI holds for dependent random variables which have positively correlated scores.  However, given the different hypotheses,  those results are not directly comparable to Theorem \ref{thm:depEPI}. 
\end{remark}

Now, we observe that the EPI and the Brunn--Minkowski inequality naturally emerge from \eqref{eq:depEPI} by considering the endpoints of independence ($\zeta = 0$) and maximal dependence ($\zeta = +\infty$).  Of course, \eqref{eq:depEPI} also gives a sharp inequality for the whole spectrum of cases in between.

\begin{example}[Shannon--Stam EPI]
Taking $\zeta = 0$ enforces the independent coupling in Theorem \ref{thm:depEPI}, and recovers the  EPI   in its usual form.  For independent $X_1,X_2\in \mathcal{P}(\mathbb{R}^n)$,
\begin{align}
e^{2 h(X_1)/n}+ e^{2 h(X_2)/n}\leq   e^{2 h(X_1+X_2)/n}.\label{eq:EPIstatement}
\end{align}
Hence, Theorem \ref{thm:depEPI} may be regarded as an extension of the  EPI  for certain dependent random variables with a sharp correction term.  
\end{example}

\begin{example}[Brunn--Minkowski inequality]
Taking  $\zeta = +\infty$ in Theorem \ref{thm:depEPI} allows for unconstrained optimization over couplings, giving
$$
e^{h(X_1)/n}+ e^{h(X_2)/n}\leq \sup_{(X_1,X_2) \in \Pi(X_1,X_2)   } e^{h(X_1+X_2)/n},
$$
where we emphasize the change in exponent from $2$ to $1$, relative to \eqref{eq:EPIstatement}.  This may be regarded as an entropic improvement of the Brunn--Minkowski inequality.  Indeed, if  $X_1,X_2$ are uniform on compact subsets $K,L\subset \mathbb{R}^n$, respectively,  we obtain the familiar Brunn--Minkowski inequality
$$
\operatorname{Vol}_n(K)^{1/n} + \operatorname{Vol}_n(L)^{1/n} \leq  \sup_{(X_1,X_2) \in \Pi(X_1,X_2)  } e^{h(X_1+X_2)/n} \leq \operatorname{Vol}_n(K+L)^{1/n},
$$
where $K+L$ denotes the Minkowski sum of $K$ and $L$, and $\operatorname{Vol}_n(\cdot)$ denotes the $n$-dimensional Lebesgue volume.   The last inequality follows since $X_1+X_2$ is supported on the Minkowski sum $K+L$, and hence the entropy is upper bounded by that of the uniform distribution on that set.  
\end{example}

It has long been observed that there is a striking similarity between the Brunn--Minkowski inequality and the EPI (see, e.g., \cite{costa1984similarity} and citing works).  It is well-known that each can be obtained from  convolution inequalities involving R\'{e}nyi entropies (e.g., the sharp Young inequality \cite{ brascamp1976best, lieb1978}, or rearrangement inequalities  \cite{WangMadiman}), when the orders of the involved R\'{e}nyi entropies are taken to the limit $0$ or $1$, respectively.   Quantitatively linking the Brunn--Minkowski and EPI  using only Shannon entropies has  proved elusive, and has  been somewhat of a looming question.  In this sense, Theorem \ref{thm:depEPI} provides an answer. In particular,  the Brunn--Minkowski inequality and EPI are   obtained as logical endpoints of a family of inequalities which involve only Shannon entropies instead of  R\'enyi entropies of varying orders.   In contrast to  derivations involving R\'enyi entropies where summands are always independent (corresponding to the convolution of densities), the  idea here is to allow dependence between the random summands.

We do not tackle the problem of characterizing equality cases in this paper, but we remark that equality is attained in the Brunn--Minkowski inequality when $K,L$ are positive homothetic convex bodies, which highlights that the stated conditions for equality in Theorem \ref{thm:depEPI} are sufficient, but not always necessary.  Indeed, for $X_1,X_2$ equal in distribution,  Cover and Zhang \cite{cover1994maximum} showed  
 $$
h(2 X_1) \leq \max_{ (X_1,X_2) \in \Pi(X_1,X_2)} h(X_1+X_2),
$$
with equality if and only if $X_1$ is log-concave.   We expect that for $\zeta <+\infty$, the only extremizers in Theorem \ref{thm:depEPI} are Gaussian with proportional covariances.  For $\zeta =+\infty$, the resulting entropy inequality is dual to the Pr\'ekopa--Leindler inequality, so   the known equality conditions  \cite{Dubuc} are likely to carry over.  Namely, equality should be attained in this case iff $X_1$ is log-concave and $X_2 = \alpha X_1$ a.s.\ for $\alpha \geq 0$.  

We remark that equality cases for \eqref{eq:maxEntComparisonConstrained} in the special case where $\nu \equiv 0$ follow from the main results in \cite{ArasCourtadeZhang}.

\section{Proof of the Main Result}\label{sec:proofs}
This section is dedicated to the  proof of Theorem \ref{thm:GaussianComparisonConstrained}. There are several preparations to make before starting the proof; this is done in the first subsection.  The second subsection brings everything together to prove an unconstrained version of  Theorem \ref{thm:GaussianComparisonConstrained} where $\nu \equiv +\infty$.  The third and final subsection proves Theorem \ref{thm:GaussianComparisonConstrained} on the basis of its unconstrained variation.

\subsection{Preliminaries}
Here  we quote the preparatory results that we shall need, and the definitions required to state them.  The various results are organized by subsection, and proofs are only given where necessary.

\subsubsection{Some additional notation}
For a Euclidean space $E$, we let $\pd(E)$ denote the set of symmetric positive definite linear operators from $E$ to itself.   That is, $A\in \pd(E)$ means $A = A^T$ and $x^T A x >0 $ for all nonzero $x\in E$.   We let $\psd(E)$ denote the closure of $\pd(E)$, equal to those symmetric matrices which are positive semidefinite.  The set $\sym(E)$ denotes the  matrices which are symmetric.    We let $\langle\cdot,\cdot\rangle_{\HS}$ denote the Hilbert--Schmidt (trace) inner product, and $\|\cdot\|_{\HS}$ denote the induced norm (i.e., the Frobenius norm). 

If $K_i \in \pd(E_i)$, $1\leq i \leq k$, then we let $\Pi(K_1, \dots, K_k)$ denote the subset of $\psd(E_0)$ consisting of those matrices $K$ such that 
$$
\pi_{E_i} K \pi_{E_i}^T = K_i, ~~~1\leq i \leq k.
$$
Note that this overloaded notation is consistent with our notation for couplings.  Indeed, if $X_i \sim N(0,K_i)$, $1\leq i \leq k$, then  $X \sim N(0,K)$ is a coupling in $\Pi(X_1, \dots, X_k)$ if and only if $K \in \Pi(K_1, \dots, K_k)$.

 If $A_i : E_i \to E_i$, $1\leq i \leq k$, are linear maps, then we write the block-diagonal matrix
$$
A = \operatorname{diag}(A_1, \dots, A_k)
$$
to denote the operator direct sum $A =    \oplus_{i=1}^k A_i  : E_0 \to E_0$.  For a set $V$, we  let $\id_{V}: V\to V$ denote the identity map from $V$ to itself.  So, for instance, we have  $\id_{E_0} = \oplus_{i=1}^k \id_{E_i} \equiv  \operatorname{diag}(\id_{E_1}, \dots, \id_{E_k})$.  

\subsubsection{The entropic forward-reverse Brascamp--Lieb inequalities}
Define
$$
D_g(\mathbf{c},\mathbf{d},\mathbf{B}) := \sup_{Z_i \in \mathcal{G}(E_i),1\leq i \leq k }\left( \sum_{i=1}^k c_i h(Z_i) - \max_{Z\in \Pi(Z_1, \dots, Z_k)}\sum_{j=1}^m d_j h(B_j Z) \right),
$$
The following is a main result of \cite{CourtadeLiu21}, when stated in terms of entropies.
\begin{theorem}\label{thm:FRBLentropy}  Fix a datum $(\mathbf{c},\mathbf{d},\mathbf{B})$.  For random vectors $X_i   \in \mathcal{P}(E_i)$, $1\leq i \leq k$, we have
\begin{align}
\sum_{i=1}^k c_i h(X_i) \leq \max_{X\in \Pi(X_1, \dots, X_k)}\sum_{j=1}^m d_j h(B_j X) + D_g(\mathbf{c},\mathbf{d},\mathbf{B}). \label{eq:MainEntropyCouplingInequality}
\end{align}
Moreover, the constant $D_g(\mathbf{c},\mathbf{d},\mathbf{B})$ is finite if and only if the following two conditions hold.
\begin{enumerate}[(i)]
\item {\bf Scaling condition:} It holds that
\begin{align}
\sum_{i=1}^k c_i \dim(E_i) = \sum_{j=1}^m d_j \dim(E^j). \label{eq:ScalingCond}
\end{align}
\item{\bf Dimension condition:}    For all subspaces $T_i \subset E_i$, $1\leq i \leq k$, 
 \begin{align}
\sum_{i=1}^k c_i \dim(T_i ) \leq \sum_{j=1}^m d_j \dim(B_j T),\hspace{5mm}\mbox{where $T = \oplus_{i=1}^k T_i$.} \label{eq:DimCond}
\end{align}
\end{enumerate}
\end{theorem}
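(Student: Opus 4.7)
My plan is to reduce the entropic inequality \eqref{eq:MainEntropyCouplingInequality} to the functional form of the forward--reverse Brascamp--Lieb inequality established in \cite{CourtadeLiu21}, invoke Gaussian extremizability of the optimal constant, and then handle the two finiteness conditions by scaling and degeneration arguments. Since the inequality is tautological when every $X_i$ is Gaussian (by the very definition of $D_g$), the content lies in extending it to arbitrary $X_i \in \mathcal{P}(E_i)$ while retaining the constant $D_g$.

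For the main inequality, I would invoke a functional FRBL statement of the following form: for any nonnegative measurable $f_i : E_i \to [0,\infty)$ and $g_j : E^j \to [0,\infty)$ satisfying the pointwise domination $\prod_{i=1}^k f_i(\pi_{E_i} x)^{c_i} \leq \prod_{j=1}^m g_j(B_j x)^{d_j}$ for all $x \in E_0$, one has $\prod_i (\int f_i)^{c_i} \leq e^{D_g(\mathbf{c},\mathbf{d},\mathbf{B})} \prod_j (\int g_j)^{d_j}$, with the optimal constant attained in a Gaussian limit. To pass to entropies, I would use the standard Legendre-type duality relating Shannon entropy to log-integrals of test functions: pairing each $f_i$ against the density of $X_i$ and dualizing the right-hand side against the density of the optimal coupling converts the functional inequality into \eqref{eq:MainEntropyCouplingInequality}, with the Gaussian constant $D_g$ preserved because it already controls the functional inequality.

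Necessity of \eqref{eq:ScalingCond} is immediate by homogeneity: the substitution $X_i \mapsto \lambda X_i$ shifts the LHS of \eqref{eq:MainEntropyCouplingInequality} by $\log\lambda \sum_i c_i \dim(E_i)$ and the coupling maximum by $\log\lambda \sum_j d_j \dim(E^j)$, so any mismatch forces $D_g = +\infty$ as $\lambda \to 0$ or $\infty$. Necessity of \eqref{eq:DimCond} I would handle by plugging in Gaussians with covariance $\id$ on a chosen subspace $T_i$ and $\epsilon \id$ on $T_i^\perp$, and letting $\epsilon \downarrow 0$: a short computation using \eqref{eq:ScalingCond} shows that the leading $|\log\epsilon|$-term in LHS$-$RHS has coefficient $\sum_i c_i \dim T_i - \sum_j d_j \dim B_j T$ (the coupling max can do no better than confining $B_j X$ near $B_j T$), which is strictly positive precisely when \eqref{eq:DimCond} fails, forcing $D_g = +\infty$.

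The main obstacle is sufficiency: proving that under \eqref{eq:ScalingCond} and \eqref{eq:DimCond} the Gaussian variational problem defining $D_g$ is finite. This is the forward--reverse analogue of Lieb's theorem on Gaussian extremizability in the classical Brascamp--Lieb setting. My approach would be to parametrize Gaussian inputs by $K_i \in \pd(E_i)$ and couplings by $K \in \psd(E_0)$ with prescribed marginals $K_i$, cast the problem as a convex program over this space, and use a compactness-modulo-scaling argument to rule out blow-up of near-maximizing sequences. The scaling condition pins down the overall homogeneity, while the dimension condition---used here as an \emph{a priori} estimate rather than as a degeneration---prevents escape along singular subspaces of the joint covariance. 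Carrying out this compactness step rigorously, including a careful analysis of how coupling covariances in $\Pi(K_1,\dots,K_k)$ can degenerate while respecting the block-marginal constraints, is the main technical content of the proof.
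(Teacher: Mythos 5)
First, a point of comparison: the paper does not prove Theorem \ref{thm:FRBLentropy} at all --- it is quoted verbatim from \cite{CourtadeLiu21} (``The following is a main result of \cite{CourtadeLiu21}, when stated in terms of entropies''). So your proposal must be measured against the proof in that reference, and against it your outline has the right general architecture (functional--entropic duality, scaling/degeneration for necessity, a compactness analysis for sufficiency of finiteness) but leaves both genuinely hard steps unproved. The more serious issue is that your first paragraph is essentially circular: you propose to deduce \eqref{eq:MainEntropyCouplingInequality} from a functional FRBL inequality holding ``with the optimal constant attained in a Gaussian limit,'' i.e.\ with constant $e^{D_g(\mathbf{c},\mathbf{d},\mathbf{B})}$. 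But $D_g$ is \emph{defined} by the Gaussian restriction of the entropic problem, so asserting that the functional inequality already holds with this constant for all admissible $(f_i,g_j)$ is exactly the Gaussian-optimality statement to be proved; the Legendre/Gibbs duality (as in \cite{carlen2009subadditivity,liu2018forward}) only transfers a constant between the two formulations, it does not establish that Gaussians are extremal. The actual proof in \cite{liu2018forward,CourtadeLiu21} attacks Gaussian optimality directly on the entropic side, via a doubling argument combined with the entropic central limit theorem --- precisely the machinery the present paper replays in its proof of Theorem \ref{thm:extImpliesGext} --- and you would need to supply an argument of that kind (or Lieb-style rotational invariance) rather than cite the conclusion.

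Your necessity arguments for \eqref{eq:ScalingCond} and \eqref{eq:DimCond} are correct and standard (scaling, and degenerate Gaussians concentrating on $T_i \oplus \epsilon\, T_i^\perp$ with $\epsilon \downarrow 0$; one does need to verify that the coupling maximum contributes at most $\sum_j d_j \dim(B_jT)\log\epsilon$ at leading order, which follows since any coupling covariance is dominated by $k$ times the block-diagonal one). The sufficiency direction, however, is the deepest part of the finiteness characterization --- the forward--reverse analogue of the Bennett--Carbery--Christ--Tao theorem --- and ``a compactness-modulo-scaling argument to rule out blow-up'' is a statement of the goal, not a proof: one must analyze how near-extremal sequences $(K_i)$ and their couplings $K \in \Pi(K_1,\dots,K_k)$ can degenerate along critical subspaces, typically via an induction on dimension and a splitting at critical subspaces. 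You correctly flag this as the main technical content, but since it is not carried out, the proposal as written does not constitute a proof of the ``if'' direction.
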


A datum $(\mathbf{c},\mathbf{d},\mathbf{B})$ is said to be {\bf extremizable} if $D(\mathbf{c},\mathbf{d},\mathbf{B})<\infty$ and there exist  $X_i \in \mathcal{P}(E_i)$, $1\leq i \leq k$ which attain equality in \eqref{eq:MainEntropyCouplingInequality}. Likewise, a  datum $(\mathbf{c},\mathbf{d},\mathbf{B})$  is said to be 
 {\bf Gaussian-extremizable} if  there exist Gaussian $X_i \in \mathcal{G}(E_i)$, $1\leq i \leq k$ which attain equality in \eqref{eq:MainEntropyCouplingInequality}.  Necessary and sufficient conditions for Gaussian-extremizability of a datum $(\mathbf{c},\mathbf{d},\mathbf{B})$ can be found in \cite{CourtadeLiu21}.  Clearly Gaussian-extremizability implies extremizability on account of Theorem \ref{thm:FRBLentropy}.  We shall need the converse, which was not proved in \cite{CourtadeLiu21}.
\begin{theorem}\label{thm:extImpliesGext}
If a datum $(\mathbf{c},\mathbf{d},\mathbf{B})$ is extremizable, then it is Gaussian-extremizable. 
\end{theorem}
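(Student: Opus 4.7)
The plan is to take an extremizer $(X_i)_{i=1}^k$ of \eqref{eq:MainEntropyCouplingInequality} and argue that the Gaussian tuple $(Z_i)_{i=1}^k$ with $Z_i\sim N(\EE X_i,\Cov(X_i))$ is itself a Gaussian extremizer. Let $F(Y_1,\ldots,Y_k):=\sum_i c_i h(Y_i)-\sup_{Y\in\Pi(Y_1,\ldots,Y_k)}\sum_j d_j h(B_j Y)$, so that Theorem \ref{thm:FRBLentropy} gives $F\le D_g$ universally while extremizability reads $F(X)=D_g$; the goal is to prove $F(Z)=D_g$.

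A routine first step uses Gaussian saturation of the coupling supremum. For any coupling $X^*\in\Pi(X_1,\ldots,X_k)$, the Gaussian $Z^*\sim N(0,\Cov(X^*))$ lies in $\Pi(Z_1,\ldots,Z_k)$ because its diagonal blocks equal $\Cov(X_i)=\Cov(Z_i)$, and $h(B_j X^*)\le h(B_jZ^*)$ by Gaussian maximum entropy under covariance constraints. Combining the resulting bound $\sup_X\sum_j d_j h(B_j X)\le\sup_Z\sum_j d_j h(B_j Z)$ with the identity $h(Z_i)-h(X_i)=D(X_i\Vert Z_i)$ and the inequalities $F(X)=D_g\ge F(Z)$ reduces the problem to
\begin{align*}
\sup_{Z\in\Pi(Z_1,\ldots,Z_k)}\sum_j d_j h(B_j Z)-\sup_{X\in\Pi(X_1,\ldots,X_k)}\sum_j d_j h(B_j X)\;\ge\;\sum_i c_i D(X_i\Vert Z_i),\qquad(\star)
\end{align*}
with $F(Z)=D_g$ being exactly the equality case in $(\star)$.

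For equality in $(\star)$, the plan is a tensorization/CLT argument. For iid copies $X^{(\ell)}$ of a given coupling $X\in\Pi(X_1,\ldots,X_k)$, the normalized partial sums $\bar X^{(n)}:=n^{-1/2}\sum_\ell X^{(\ell)}$ have marginals $\bar X_i^{(n)}$, and Barron's entropic central limit theorem gives both $h(\bar X_i^{(n)})\to h(Z_i)$ and $h(B_j\bar X^{(n)})\to h(B_j\widetilde Z)$, where $\widetilde Z\sim N(0,\Cov(X))$ is a Gaussian coupling in $\Pi(Z_1,\ldots,Z_k)$. Applying Theorem \ref{thm:FRBLentropy} to $(\bar X_i^{(n)})$ at each $n$ and passing to the limit transfers the FRBL inequality to one between the Gaussian marginals $(Z_i)$ and the specific coupling $\widetilde Z$; optimizing the choice of $X$ so that $\Cov(X)$ sweeps a maximizing sequence of Gaussian couplings of $(Z_i)$ then upgrades $(\star)$ to equality.

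The principal obstacle is the limiting analysis of the sup-over-couplings functional, which in the FRBL inequality at level $n$ ranges over \emph{all} couplings of $\bar X_i^{(n)}$ and not just the CLT-normalized ones. Ruling out ``foreign'' near-maximizers requires a uniform-in-$n$ compactness argument for covariances of approximately optimal couplings; such a priori controls follow from the dimension and scaling conditions \eqref{eq:DimCond} and \eqref{eq:ScalingCond} of Theorem \ref{thm:FRBLentropy} combined with the finite second-moment assumption embedded in $\mathcal{P}(E_i)$. A secondary technical point is a density statement that every Gaussian covariance in $\Pi(\Cov(Z_1),\ldots,\Cov(Z_k))$ is realizable (or at least approximable) as the covariance of some coupling of $(X_i)$, which can be arranged by convolving the $X_i$'s with auxiliary Gaussian tails and taking a limit at the end.
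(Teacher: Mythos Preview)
Your reduction to $(\star)$ is fine, but observe that $(\star)$ with the inequality sign as written is precisely the statement $F(Z)\le D_g=F(X)$, which you already know; what you must prove is the \emph{reverse} inequality $\sup_Z-\sup_X\le\sum_i c_i D(X_i\Vert Z_i)$, i.e., $F(Z)\ge D_g$. Your CLT step, however, never uses extremality of $(X_i)$ again after the reduction, and applying Theorem~\ref{thm:FRBLentropy} to $(\bar X_i^{(n)})$ only yields the \emph{inequality}
\[
\sum_i c_i\,h(\bar X_i^{(n)})\;\le\;\max_{Y\in\Pi(\bar X_1^{(n)},\ldots,\bar X_k^{(n)})}\sum_j d_j\,h(B_jY)+D_g .
\]
Passing to the limit --- even granting every compactness and density claim you flag --- delivers at best $\sum_i c_i h(Z_i)\le\sup_Z\sum_j d_j h(B_jZ)+D_g$, which is again $F(Z)\le D_g$. ``Sweeping'' $\Cov(X)$ over maximizing Gaussian covariances does not change this: you are taking limits in an inequality, and an inequality is all you can get back.

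The missing idea, and the heart of the paper's argument, is to show that \emph{extremality itself is closed under convolution}: if both $(\mu_i)$ and $(\nu_i)$ are extremal in \eqref{eq:MainEntropyCouplingInequality}, then so is $(\mu_i*\nu_i)$. This is established by a doubling trick (work with $Z_i^\pm=X_i\pm Y_i$, take optimal couplings of $(Z_i^+)$ and of $(Z_i^-\mid z^+)$, and unwind). Once you know the standardized sums $(\bar X_i^{(n)})$ are extremal \emph{for every $n$}, the inequality above is an \emph{equality} at each stage; then the limiting analysis --- where your ``secondary technical point'' (approximating any $K\in\Pi(K_1,\ldots,K_k)$ by covariances of couplings of the $\bar X_i^{(n)}$, handled via $W_2$-CLT and optimal transport as in Lemma~\ref{lem:W2ConvergenceCovariance}) is exactly what is needed --- furnishes the lower bound $\liminf_n\max_{Y}\sum_j d_j h(B_jY)\ge\max_{K}\sum_j\frac{d_j}{2}\log\det(B_jKB_j^T)+\text{const}$, and hence $F(Z)\ge D_g$. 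The compactness control you propose on $\limsup$ of the coupling supremum is not required and would in any case point the wrong way.
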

The proof follows a doubling argument similar to what appears \cite[Proof of Theorem 8]{liu2018forward}.  We will need the following Lemma.
\begin{lemma}\label{lem:W2ConvergenceCovariance}
For each $1\leq i \leq k$, let  $Z_i\sim N(0,K_i)$ and let $(X_{n,i})_{n\geq 1}$ be a sequence of zero-mean random vectors satisfying 
$$\lim_{n\to\infty} W_2(X_{n,i}, Z_i)= 0,$$
where $W_2: \mathcal{P}(E_i)\times  \mathcal{P}(E_i)\to \mathbb{R}$ is the Wasserstein distance of order 2.  
  For any $K\in \Pi(K_1, \dots, K_k)$, there exists a sequence of couplings $X_n \in \Pi(X_{n,1},\dots, X_{n,k})$, $n\geq 1$ such that $\|\Cov(X_n) - K\|_{\HS}\to 0$.
\end{lemma}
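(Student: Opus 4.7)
The plan is to lift the fixed Gaussian coupling $Z\sim N(0,K)$ to a coupling of the $X_{n,i}$'s via marginal optimal transport maps between each $Z_i$ and $X_{n,i}$, and then read off Hilbert--Schmidt convergence of the covariances from the associated $L^2$-convergence.

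First I would invoke Brenier's theorem: since each $K_i\in\pd(E_i)$, the Gaussian $Z_i\sim N(0,K_i)$ is absolutely continuous with respect to Lebesgue measure on $E_i$, so for every $n$ and $i$ there exists a measurable map $T_{n,i}:E_i\to E_i$ (the gradient of a convex potential) such that $T_{n,i}(Z_i)\overset{\mathrm{law}}{=}X_{n,i}$ and
$$
\EE\|T_{n,i}(Z_i)-Z_i\|^2 = W_2(X_{n,i},Z_i)^2 \longrightarrow 0.
$$
In particular, $T_{n,i}(Z_i)\to Z_i$ in $L^2(\Pb;E_i)$ for each $i$. Next I fix a single centered Gaussian vector $Z\sim N(0,K)$ on $E_0$ with $\pi_{E_i}(Z)=Z_i$ (which exists because $K\in\Pi(K_1,\dots,K_k)\subset\psd(E_0)$) and define the coupling
$$
X_n:=\bigl(T_{n,1}(Z_1),T_{n,2}(Z_2),\dots,T_{n,k}(Z_k)\bigr)\in E_0.
$$
By construction $\pi_{E_i}(X_n)=T_{n,i}(Z_i)\overset{\mathrm{law}}{=}X_{n,i}$, so $X_n\in\Pi(X_{n,1},\dots,X_{n,k})$ as required.

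It then remains to estimate $\Cov(X_n)-K$ block by block. Since $\EE X_{n,i}=0$ for every $i$, the $(i,j)$-block of $\Cov(X_n)$ is simply $\EE[T_{n,i}(Z_i)T_{n,j}(Z_j)^T]$, while the $(i,j)$-block of $K$ is $\EE[Z_iZ_j^T]$. The bilinear form $(U,V)\mapsto\EE[UV^T]$ satisfies $\|\EE[UV^T]\|_{\HS}\leq\|U\|_{L^2}\|V\|_{L^2}$ by Cauchy--Schwarz, hence is jointly continuous on $L^2\times L^2$. Together with the $L^2$-convergence $T_{n,i}(Z_i)\to Z_i$ from the first step and the resulting boundedness of $\|T_{n,i}(Z_i)\|_{L^2}\to\|Z_i\|_{L^2}$, this yields
$$
\bigl\|\EE[T_{n,i}(Z_i)T_{n,j}(Z_j)^T]-\EE[Z_iZ_j^T]\bigr\|_{\HS}\longrightarrow 0
$$
for each pair $(i,j)$. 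Summing over the finitely many blocks gives $\|\Cov(X_n)-K\|_{\HS}\to 0$.

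The only step demanding any care is the appeal to Brenier's theorem, which requires absolute continuity of the source measure --- automatic here from $K_i\in\pd(E_i)$. I do not anticipate a genuine obstacle: the covariance-convergence step reduces to continuity of a bilinear form on $L^2\times L^2$, which is nothing more than Cauchy--Schwarz. If one wished to bypass Brenier entirely, any optimal Kantorovich coupling of $(Z_i,X_{n,i})$, disintegrated over $Z_i$ and then pasted to the fixed joint law of $Z=(Z_1,\dots,Z_k)$, would serve equally well and produce the same $L^2$-close lift.
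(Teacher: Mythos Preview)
Your proposal is correct and follows essentially the same approach as the paper: construct the coupling by pushing the fixed Gaussian $Z\sim N(0,K)$ forward through the marginal optimal transport maps $T_{n,i}$, and then use Cauchy--Schwarz together with $\EE|T_{n,i}(Z_i)-Z_i|^2=W_2(X_{n,i},Z_i)^2\to 0$ to control the block-by-block covariance difference. The paper carries out the last step via the explicit trilinear expansion of $T_{n,i}(Z_i)T_{n,j}(Z_j)^T-Z_iZ_j^T$, whereas you phrase it as continuity of the bilinear map $(U,V)\mapsto\EE[UV^T]$ on $L^2$; these are the same computation.
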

\begin{proof}
Let $Z\sim N(0,K)$, and observe that $Z \in \Pi(Z_1, \dots, Z_k)$.  Let $T_{n,i}$ be the optimal transport  map sending $N(0,K_i)$ to $\law(X_{n,i})$ (see, e.g., \cite{villani2003topics}).  Then   $X_n = (T_{n,1}(Z_1), \dots,  T_{n,k}(Z_k))  \in \Pi(X_{n,1},\dots, X_{n,k})$ satisfies 
\begin{align*}
T_{n,i}(Z_{i})T_{n,i'}(Z_{i'})^T - Z_i Z_{i'}^T &=    Z_i (T_{n,i'}(Z_{i'}) - Z_{i'}   )^T +  (T_{n,i}(Z_{i}) - Z_{i}) Z_{i'}^T \\
  &\phantom{=}+    (T_{n,i}(Z_{i}) - Z_{i})  (T_{n,i'}(Z_{i'}) - Z_{i'}   )^T .
\end{align*}
Taking expectations of both sides and applying Cauchy--Schwarz, we conclude
$$
\|\Cov(X_n) - K\|_{\HS} \to 0
$$
since $\EE|T_{n,i}(Z_{i}) - Z_{i}|^2 = W_2(X_{n,i}, Z_i)^2 \to 0$ for each $1\leq i\leq k$. 
\end{proof}

\begin{proof}[Proof of Theorem \ref{thm:extImpliesGext}] The approach will be to show that extremizers are closed under convolutions, and apply the entropic central limit theorem.  Toward this end, let $X_i \sim \mu_i \in \mathcal{P}(E_i)$ be independent of $Y_i \sim \nu_i \in \mathcal{P}(E_i)$, $1\leq i \leq k$, both assumed to be extremal in \eqref{eq:MainEntropyCouplingInequality}.  Define
$$
Z_i^+ := X_i + Y_i, \hspace{5mm} Z_i^- := X_i - Y_i, \hspace{5mm}1\leq i \leq k,
$$
and let
  $$
  Z^+ \in \arg\max_{Z\in \Pi(Z_1^+, \dots, Z_k^+)} \sum_{j=1}^m d_j h(B_j Z).
 $$
 Let $Z_i^-|z_i^+$ denote the random variable $Z_i^-$ conditioned on $\{Z_i^+= z_i^+\}$, which has law in $\mathcal{P}(E_i)$ for $\law(Z_i^+)$-a.e.~$z_i^+\in E_i$ by disintegration.  Next, for $z^+ = (z_1^+, \dots,  z_k^+)\in E_0$, let
 $$
 Z^-|z^+ \in  \arg\max_{Z\in \Pi(Z_1^-|z_1^+, \dots, Z_k^-|z_k^+)} \sum_{j=1}^m d_j h(B_j Z).
 $$
We can assume these couplings are  such  that $z^+\mapsto \law( Z^-|z^+)$ is Borel measurable (i.e., $\law( Z^-|z^+)$ is a regular conditional probability). This can be justified by  measurable selection theorems, as in \cite[Cor. 5.22]{villani2008} and \cite[p. 42]{liu2017ITperspectiveBL}.  
  With this assumption, definitions imply
 \begin{align*}
  \sum_{i=1}^k c_i h(Z^+_i) &\leq \sum_{j=1}^m d_j h(B_j Z^+)  + D(\mathbf{c},\mathbf{d},\mathbf{B})\\
 \sum_{i=1}^k c_i h(Z^-_i |  z_i^+ ) &\leq \sum_{j=1}^m d_j h(B_j Z^-|  z^+)  + D(\mathbf{c},\mathbf{d},\mathbf{B}),
 \end{align*}
 where the latter holds for $\law(Z^+)$-a.e.~$z^+$.  Integrating the second inequality against the distribution of $Z^+$ gives the inequality for conditional entropies:
  \begin{align*}
 \sum_{i=1}^k c_i h(Z^-_i |  Z_i^+ ) &\leq \sum_{j=1}^m d_j h(B_j Z^-|  Z^+)  + D(\mathbf{c},\mathbf{d},\mathbf{B})\\
 &\leq \sum_{j=1}^m d_j h(B_j Z^-|  B_j Z^+)  + D(\mathbf{c},\mathbf{d},\mathbf{B}),
 \end{align*}
 where the second inequality follows since conditioning reduces entropy.  Now, define
 $$
 X = \frac{1}{2}\left( Z^+ + (Z^-|Z^+)   \right) , \hspace{5mm} Y = \frac{1}{2}\left( Z^+ - (Z^-|Z^+)   \right). 
 $$
 Observe that $X\in \Pi(X_1, \dots, X_k)$ and $Y\in \Pi(Y_1, \dots, Y_k)$.  So, using the above inequalities and definitions, we have
\begin{align*}
2 D(\mathbf{c},\mathbf{d},\mathbf{B}) &\leq   \sum_{i=1}^k c_i h(X_i,Y_i)    - \sum_{j=1}^m d_j h(B_j X )  - \sum_{j=1}^m d_j h(B_j Y )     \\
&\leq   \sum_{i=1}^k c_i h(X_i,Y_i)    - \sum_{j=1}^m d_j h(B_j X,B_j Y)   \\
&= \sum_{i=1}^k c_i h(Z^+_i) + \sum_{i=1}^k c_i h(Z^-_i | Z_i^+)  \\
&\phantom{=}- \sum_{j=1}^m d_j h(B_j Z^+)  - \sum_{j=1}^m d_j h(B_j Z^-|B_j Z^+) \\
&\leq 2 D(\mathbf{c},\mathbf{d},\mathbf{B}) 
\end{align*}
Thus, we conclude 
$$
\sum_{i=1}^k c_i h(Z^+_i) = \sum_{j=1}^m d_j h(B_j Z^+) +  D(\mathbf{c},\mathbf{d},\mathbf{B}), 
$$
showing that $Z_i^+ \sim \mu_i*\nu_i \in \mathcal{P}(E_i)$, $1\leq i\leq k$ are extremal in \eqref{eq:MainEntropyCouplingInequality} as desired.  The scaling condition \eqref{eq:ScalingCond} is necessary for $D(\mathbf{c},\mathbf{d},\mathbf{B})<\infty$, so it follows by induction and scale invariance that, for every $n\geq 1$,   marginally specified $(Z_{n,i})_{i=1}^k$ are extremal in \eqref{eq:MainEntropyCouplingInequality}, where
$$
Z_{n,i}:=\frac{1}{\sqrt{n}}\sum_{\ell=1}^n (X_{\ell,i}-\EE[X_i]),  
$$
and $(X_{\ell,i})_{\ell\geq 1}$ are i.i.d.\ copies of $X_i$.

 Define $K_i = \Cov(X_i)$ (which is positive definite since $h(X_i)$ is finite), and fix any $K \in \Pi(K_1, \dots, K_k)$.  For any $\epsilon>0$, Lemma \ref{lem:W2ConvergenceCovariance} together with the central limit theorem  for $W_2$ implies there exists $N \geq 1$ and a coupling $Z_N \in \Pi(Z_{N,1},\dots, Z_{N,k})$ such that $\|\Cov(Z_N)-K\|_{\HS}<\epsilon$.  Letting $Z_N^{(n)}$ denote the standardized sum of $n$ i.i.d.\ copies of $Z_N$, we have $Z^{(n)}_N \in \Pi(Z_{nN,1},\dots, Z_{nN,k})$ for each $n\geq 1$.  Thus, by the entropic central limit theorem \cite{barronCLT, CarlenSoffer}, we have
\begin{align*}
\limsup_{n\to \infty} \max_{Z_n \in \Pi(Z_{n,1},\dots, Z_{n,k})} \sum_{j=1}^m d_j h(B_j Z_n) &\geq \lim_{n\to\infty}  \sum_{j=1}^m d_j h(B_j Z^{(n)}_N  )=\sum_{j=1}^m d_j h(B_j Z^{*}_N  )
\end{align*}
where $Z^{*}_N\sim N(0,\Cov(Z_N))$.  Our arbitrary choice of $K$ and $\epsilon$ together with continuity  of determinants implies 
\begin{align*}
&\limsup_{n\to\infty} \max_{Z_n \in \Pi(Z_{n,1},\dots, Z_{n,k})} \sum_{j=1}^m d_j h(B_j Z_n) \geq\max_{K \in \Pi(K_1, \dots, K_k)  }\sum_{j=1}^m  \frac{d_j}{2}\log \left(  (2\pi e)^{\dim(E^j)} \det( B_j K B_j^T )\right).
\end{align*}
Invoking the entropic central limit theorem, and using the fact that   $(Z_{n,i})_{i=1}^k$ are extremal in \eqref{eq:MainEntropyCouplingInequality} for each $n\geq 1$, we   conclude
\begin{align*}
\sum_{i=1}^k  \frac{c_i}{2}\log \left(  (2\pi e)^{\dim(E_i)} \det( K_i )\right) &= \lim_{n\to\infty}\sum_{i=1}^k c_i h(Z_{n,i})\\
&=\lim_{n\to\infty}  \max_{Z_n \in \Pi(Z_{n,1},\dots, Z_{n,k})} \sum_{j=1}^m d_j h(B_j Z_n)  + D(\mathbf{c},\mathbf{d},\mathbf{B})\\
&\geq\max_{K \in \Pi(K_1, \dots, K_k)  }\sum_{j=1}^m  \frac{d_j}{2}\log \left(  (2\pi e)^{\dim(E^j)} \det( B_j K B_j^T )\right)+ D(\mathbf{c},\mathbf{d},\mathbf{B}).
\end{align*} 
Thus, by definitions, we have equality throughout, and  $(\mathbf{c},\mathbf{d},\mathbf{B})$ is Gaussian-extremizable.
\end{proof}

\subsubsection{Properties of the max-entropy term}
Let us  briefly   make a few technical observations related to the max-entropy quantity that appears in \eqref{eq:MainEntropyCouplingInequality}.  First, we quote a technical lemma that will be needed several times.  A proof can be found in \cite[Lemma A2]{liu2018forward}.
\begin{lemma} \label{lem:WeakSemicontH} Let $(\mu_n)_{n\geq 1} \subset\mathcal{P}(E)$ converge in distribution to $\mu$.  If $\sup_{n\geq 1}\int_E |x|^2 d\mu_n  < \infty$, then 
$$
\limsup_{n\to\infty}h(\mu_n) \leq h(\mu).  
$$
\end{lemma}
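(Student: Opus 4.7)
The plan is to express $h(\mu_n)$ in terms of a relative entropy against a \emph{fixed} Gaussian reference, and then leverage the weak lower semicontinuity of relative entropy together with the uniform second-moment hypothesis.

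Concretely, set $d = \dim(E)$ and for each $\sigma > 0$ let $\gamma_\sigma = N(0, \sigma^2 \id_E)$; since its density $g_\sigma$ is everywhere positive, $\gamma_\sigma$ is equivalent to Lebesgue measure on $E$. Using $-\log g_\sigma(x) = \tfrac{d}{2}\log(2\pi\sigma^2) + |x|^2/(2\sigma^2)$, a direct rearrangement of $D(\mu_n\|\gamma_\sigma) = -h(\mu_n) - \int \log g_\sigma\, d\mu_n$ yields the identity
\begin{equation*}
h(\mu_n) = \frac{d}{2}\log(2\pi\sigma^2) + \frac{1}{2\sigma^2}\int_E |x|^2\, d\mu_n(x) - D(\mu_n \| \gamma_\sigma),
\end{equation*}
which holds literally when $\mu_n$ is absolutely continuous with finite second moment, and persists in the extended sense otherwise (singular $\mu_n$ corresponds to both $h(\mu_n) = -\infty$ and $D(\mu_n\|\gamma_\sigma) = +\infty$).

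I would then invoke two standard facts. First, the functional $\nu \mapsto D(\nu\|\gamma_\sigma)$ is weakly lower semicontinuous on $\mathcal{P}(E)$, via the Donsker--Varadhan representation
\begin{equation*}
D(\nu\|\gamma_\sigma) = \sup_{\phi \in C_b(E)}\left( \int \phi\, d\nu - \log \int e^\phi d\gamma_\sigma \right),
\end{equation*}
which exhibits it as a supremum of weakly continuous functionals. Second, with $M := \sup_n \int |x|^2 d\mu_n < \infty$, the Portmanteau theorem applied to the lower semicontinuous function $x \mapsto |x|^2$ gives $\int |x|^2 d\mu \leq M$; in particular $\mu$ has finite second moment and $\limsup_n \int|x|^2 d\mu_n \leq M$. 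Taking the limsup of the identity above (via $\limsup(a_n - b_n) \leq \limsup a_n - \liminf b_n$) then produces
\begin{equation*}
\limsup_{n\to\infty} h(\mu_n) \leq \frac{d}{2}\log(2\pi\sigma^2) + \frac{M}{2\sigma^2} - D(\mu\|\gamma_\sigma).
\end{equation*}

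To finish, if $\mu$ is singular with respect to Lebesgue, then $D(\mu\|\gamma_\sigma) = +\infty$ and the displayed bound immediately gives $\limsup_n h(\mu_n) = -\infty = h(\mu)$. Otherwise, applying the same identity to $\mu$ and subtracting yields $\limsup_n h(\mu_n) - h(\mu) \leq (M - \int |x|^2 d\mu)/(2\sigma^2) \leq M/(2\sigma^2)$, which vanishes as $\sigma \to \infty$. The main obstacle is essentially bookkeeping: cleanly unifying the absolutely continuous and singular cases (which the equivalence of $\gamma_\sigma$ and Lebesgue accomplishes), and making the quadratic ``tail slack'' between $\mu_n$ and $\mu$ -- which can genuinely be strict under weak convergence -- innocuous, which is exactly what the $1/\sigma^2$ prefactor achieves in the $\sigma \to \infty$ limit.
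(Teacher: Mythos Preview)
Your argument is correct. The paper does not actually prove this lemma; it merely quotes it as a technical fact, referring to \cite[Lemma~A2]{liu2018forward} for a proof. So there is no ``paper's own proof'' to compare against here, but your approach---rewriting $h(\mu_n)$ via the identity $h(\mu_n) = \tfrac{d}{2}\log(2\pi\sigma^2) + \tfrac{1}{2\sigma^2}\int|x|^2\,d\mu_n - D(\mu_n\|\gamma_\sigma)$, invoking weak lower semicontinuity of $D(\cdot\|\gamma_\sigma)$, and sending $\sigma\to\infty$ to kill the second-moment slack---is the standard route and is exactly how such results are typically established in the literature (including, as far as we know, in the cited reference). The case split on whether $\mu$ is absolutely continuous is handled cleanly, and the use of Portmanteau for the nonnegative lower semicontinuous function $|x|^2$ to get $\int|x|^2\,d\mu \leq M$ is correct. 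One cosmetic remark: you don't really need Portmanteau for the bound $\limsup_n \int|x|^2\,d\mu_n \leq M$, since that is immediate from the definition of $M$; Portmanteau is only needed to conclude $\int|x|^2\,d\mu < \infty$ so that the identity for $h(\mu)$ makes sense in the absolutely continuous case.
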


Now, we point out that the max-entropy term is well-defined as a maximum. 
\begin{proposition}\label{prop:MaxEntropyCouplingExists} Fix $(\mathbf{d},\mathbf{B})$ and $X_i\in \mathcal{P}(E_i)$, $1\leq i \leq k$.  The function
$$
X \in \Pi(X_1,\dots, X_k)\longmapsto \sum_{j=1}^m d_j h(B_j X)
$$
achieves a maximum at some $X^* \in \Pi(X_1,\dots, X_k)$.  Moreover, if each $X_i$ is Gaussian, then $X^*$ is  Gaussian.%
\end{proposition}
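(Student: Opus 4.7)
The plan is to apply the direct method of the calculus of variations: produce a maximizing sequence, extract a weak limit by tightness, and verify that the limit attains the supremum by upper semicontinuity.

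First I would observe that the supremum is bounded above. Since the decomposition $E_0 = \oplus_{i=1}^k E_i$ is orthogonal, every $X \in \Pi(X_1,\dots,X_k)$ satisfies $\EE|X|^2 = \sum_{i=1}^k \EE|X_i|^2$, a fixed constant. Consequently $\Tr \Cov(B_j X) \leq \|B_j\|^2 \sum_i \EE|X_i|^2$ is uniformly bounded, and the Gaussian maximum-entropy principle yields a uniform upper bound on each $h(B_j X)$. If the supremum equals $-\infty$ it is trivially attained at any coupling, so I may assume it is some finite $M \in \R$.

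Next, let $X_n \in \Pi(X_1,\dots,X_k)$ be a maximizing sequence. The uniform second-moment bound gives tightness on $E_0$, so by Prokhorov's theorem a subsequence converges weakly to some $X^*$. Continuity of the projections $\pi_{E_i}$ and of the linear maps $B_j$ combined with the continuous mapping theorem yields $\pi_{E_i}(X_n) \Rightarrow \pi_{E_i}(X^*)$ and $B_j X_n \Rightarrow B_j X^*$; the first, together with $\pi_{E_i}(X_n) \overset{law}{=} X_i$, shows $X^* \in \Pi(X_1,\dots,X_k)$, and the second, combined with the uniform bound on $\EE|B_j X_n|^2$, allows Lemma \ref{lem:WeakSemicontH} to conclude $\limsup_n h(B_j X_n) \leq h(B_j X^*)$ for each $j$.

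The small technical obstacle is that $\limsup$ does not commute with finite sums, so the individual upper-semicontinuity bounds do not immediately give upper semicontinuity of $\sum_j d_j h(B_j X_n)$. To handle this, I would note that each $d_j h(B_j X_n)$ is bounded above uniformly by some $C_j$ while $\sum_j d_j h(B_j X_n) \to M \in \R$; subtracting then shows each individual $d_j h(B_j X_n)$ is bounded below as well. Passing to a further subsequence, each $d_j h(B_j X_n)$ converges to some $a_j \in \R$ with $\sum_j a_j = M$ and $a_j \leq d_j h(B_j X^*)$ term by term, hence $M \leq \sum_j d_j h(B_j X^*)$, and $X^*$ is a maximizer. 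For the Gaussianity claim, assume each $X_i$ is Gaussian and let $Z^* \sim N(0,\Cov(X^*))$; since Gaussians are determined by covariance, $Z^* \in \Pi(X_1,\dots,X_k)$, and $\Cov(B_j Z^*) = \Cov(B_j X^*)$ combined with the Gaussian maximum-entropy principle gives $h(B_j Z^*) \geq h(B_j X^*)$ for every $j$, so $Z^*$ is itself a Gaussian maximizer. I expect no substantive difficulty beyond the limsup-versus-sum issue above, which is dispatched by the routine subsequence extraction just described.
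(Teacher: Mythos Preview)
Your proof is correct and follows essentially the same approach as the paper's: tightness via the fixed second moments, Prokhorov compactness, weak continuity of the $B_j$, and Lemma~\ref{lem:WeakSemicontH} for upper semicontinuity, with the Gaussian claim handled by the max-entropy property. Your extra care about ``$\limsup$ versus sums'' is unnecessary, since subadditivity $\limsup\sum_j a_{j,n} \le \sum_j \limsup a_{j,n}$ already suffices once each term is bounded above, but your subsequence extraction is harmless.
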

\begin{proof}
We have  $\sup_{X \in \Pi(X_1,\dots, X_k)}\EE|B_j X|^2 <  \infty$ for each $1\leq j \leq m$ since each $X_i$ has bounded second moments.  The second moment constraint also implies $\Pi(X_1,\dots, X_k)$ is tight, and it is easily checked to be closed in the weak topology.  Thus, Prokhorov's theorem ensures $\Pi(X_1,\dots, X_k)$ is sequentially compact.  So,  if $(X^{(n)})_{n\geq 1}\subset \Pi(X_1,\dots, X_k)$ is such that 
$$
\lim_{n\to\infty}\sum_{j=1}^m d_j h(B_j X^{(n)}) = \sup_{X \in \Pi(X_1,\dots, X_k)} \sum_{j=1}^m d_j h(B_j X),
$$
we can assume $X^{(n)}\to X^* \in \Pi(X_1,\dots, X_k)$ weakly, by  passing to a subsequence if necessary.  This implies $B_j X^{(n)}\to B_jX^*$ weakly for each $1\leq j\leq m$.  The first claim follows by an application of Lemma \ref{lem:WeakSemicontH}.

The second claim now follows from the first, together with the fact that Gaussians maximize entropy under a covariance constraint.
\end{proof}

 Next, if $X_i \sim N(0,K_i)$ for $K_i \in \pd(E_i)$, $1\leq i \leq k$, then the entropy maximization in \eqref{eq:MainEntropyCouplingInequality} is equivalent to the following optimization problem
\begin{align}
   (K_i)_{i=1}^k \mapsto  \max_{K \in \Pi(K_1, \dots, K_k) }   \sum_{j=1}^md_j \log \det(B_j K B_j^T).  \label{eq:maxCouplingsContPro}
\end{align}
 This maximization enjoys a certain strong duality property, which is a consequence of the Fenchel--Rockafellar theorem.  The following can be found in \cite[Theorem 2.8]{CourtadeLiu21}. 
 \begin{theorem}\label{thm:FRdualQuadraticForms} Fix $(\mathbf{d},\mathbf{B})$.  For any $K_i \in \pd(E_i)$, $1\leq i\leq k$, it holds that
\begin{align}
&\max_{K \in \Pi(K_1, \dots, K_k) }\sum_{j=1}^m d_j     \log \det \left( B_j   K   B_j^T \right) + \sum_{j=1}^m d_j  \dim(E^j) \notag \\
&=\inf_{(U_i,V_j)_{1\leq i\leq k, 1\leq j \leq m}}  \left( 
 \sum_{i=1}^k    \langle U_i, K_i\rangle_{\HS} - \sum_{j=1}^m d_j \log \det V_j\right) , \label{FenchelMaxCouplingIntro}
\end{align}
where the infimum is over $U_i\in \pd(E_i),1\leq i\leq k$ and $V_j\in \pd(E^j), 1\leq j\leq m$ satisfying  
\begin{align}
\sum_{j=1}^m d_j B_j^T V_j B_j \leq \operatorname{diag}(  U_1, \dots,   U_k). \label{eq:MinMaxOperatorHypothesisIntro}
\end{align}
\end{theorem}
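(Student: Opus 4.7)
This is a standard convex duality statement, and the natural route is the Fenchel--Rockafellar theorem. Two identifications drive the construction of the dual variables: the multipliers $V_j \in \pd(E^j)$ arise from the variational representation
\[
\log \det A \;=\; \inf_{V \in \pd(\cdot)} \bigl( \langle V, A\rangle_{\HS} - \log \det V - \dim \bigr), \qquad A \in \pd,
\]
(which is the concave Legendre-type identity for $-\log \det$, following from its tangent-hyperplane inequality), while the $U_i \in \sym(E_i)$ are Lagrange multipliers for the affine block-diagonal constraints $\pi_{E_i} K \pi_{E_i}^\top = K_i$ defining $\Pi(K_1,\ldots,K_k)$. The semidefinite inequality \eqref{eq:MinMaxOperatorHypothesisIntro} is then precisely the stationarity/feasibility condition obtained when $K$ is allowed to range freely over $\psd(E_0)$ after substituting the multipliers.

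\textbf{Execution.} First I would substitute the variational identity into each $\log \det(B_j K B_j^\top)$ term of the primal, rewriting the LHS of \eqref{FenchelMaxCouplingIntro} as
\[
\max_K \inf_V \Bigl[ \bigl\langle \textstyle\sum_j d_j B_j^\top V_j B_j,\, K\bigr\rangle_{\HS} - \sum_j d_j \log \det V_j \Bigr],
\]
where the max is over $K \in \Pi(K_1,\ldots,K_k)$ and the inf over $V_j \in \pd(E^j)$. The easy direction ``$\leq$'' in \eqref{FenchelMaxCouplingIntro} then follows from the minimax inequality $\max \inf \leq \inf \max$ together with the observation that $\langle M - \diag(U_1,\ldots,U_k),\, K\rangle_{\HS} \leq 0$ whenever $K \succeq 0$ and $\diag(U) \succeq M := \sum_j d_j B_j^\top V_j B_j$; this bounds $\max_{K \in \Pi(K_1,\ldots,K_k)} \langle M, K\rangle_{\HS}$ by $\sum_i \langle U_i, K_i\rangle_{\HS}$, and taking the infimum over admissible $U$ delivers the dual expression. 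To reverse the inequality, I would verify the hypotheses for strong duality: $\Pi(K_1,\ldots,K_k)$ is compact in $\sym(E_0)$ (diagonal blocks fixed, off-diagonal blocks controlled by PSD-ness) and has $\diag(K_1,\ldots,K_k) \in \pd(E_0)$ as a strictly positive-definite interior point (since each $K_i \succ 0$), while the objective is concave and upper semicontinuous on the PSD cone. This gives Slater's condition, and a Fenchel--Rockafellar/Sion minimax argument closes the gap. Attainment of the primal maximum follows from Proposition~\ref{prop:MaxEntropyCouplingExists} applied to Gaussians.

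\textbf{Main obstacle.} The delicate point is the finiteness and attainment of the dual infimum when the maps $B_j$ are not all surjective: components of $V_j$ lying in $\ker(B_j^\top)$ could contribute arbitrarily large amounts to $\log \det V_j$ without altering $\sum_j d_j B_j^\top V_j B_j$, so the dual naively appears to collapse to $-\infty$. I would handle this by reducing, without loss of generality, to the case where each $B_j$ is surjective, replacing $E^j$ by $\mathrm{range}(B_j)$ and projecting $V_j$ accordingly; after this reduction, the semidefinite constraint \eqref{eq:MinMaxOperatorHypothesisIntro} controls $\|V_j\|_{\HS}$ in terms of $\|U\|_{\HS}$, the term $\sum_i \langle U_i, K_i\rangle_{\HS}$ is coercive in $U$ (since each $K_i \succ 0$), and a standard compactness argument then produces a dual minimizer. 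Matching the primal and dual optima follows from the KKT relations $V_j^{-1} = B_j K^* B_j^\top$ at the primal optimum $K^*$, together with complementary slackness $\bigl(\diag(U) - \sum_j d_j B_j^\top V_j B_j\bigr) K^* = 0$.
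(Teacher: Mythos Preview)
Your proposal is correct and aligns with what the paper itself indicates: the paper does not prove Theorem~\ref{thm:FRdualQuadraticForms} in-text but simply quotes it from \cite[Theorem~2.8]{CourtadeLiu21}, remarking that it ``is a consequence of the Fenchel--Rockafellar theorem.'' Your outline---substitute the Legendre identity $\log\det A = \inf_{V}\bigl(\langle V,A\rangle_{\HS}-\log\det V-\dim\bigr)$, obtain the easy direction from the minimax inequality, and close the gap via Sion/Slater using compactness of $\Pi(K_1,\dots,K_k)$ and the strictly feasible point $\diag(K_1,\dots,K_k)\in\pd(E_0)$---is exactly that route, so there is nothing further to compare.

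One small comment on your ``main obstacle'': the theorem as stated only asserts an infimum on the dual side, not a minimum, so attainment need not be argued. In the non-surjective case both sides are $-\infty$ (your observation about $\ker(B_j^\top)$ shows the RHS collapses, and $\det(B_jKB_j^\top)=0$ for every $K$ shows the LHS does too), so the reduction to surjective $B_j$ is harmless but also unnecessary for the equality itself.
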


\begin{corollary}\label{cor:ContinuityOfMaxDet} The function in \eqref{eq:maxCouplingsContPro} is continuous on $\prod_{i=1}^k \pd(E_i)$. 
\end{corollary}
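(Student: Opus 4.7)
The plan is to prove continuity of
$$F : (K_i)_{i=1}^k \longmapsto \max_{K \in \Pi(K_1, \dots, K_k)} \sum_{j=1}^m d_j \log \det(B_j K B_j^\top)$$
by establishing upper and lower semicontinuity on $\prod_{i=1}^k \pd(E_i)$ separately.

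Upper semicontinuity comes essentially for free from Theorem \ref{thm:FRdualQuadraticForms}. The right-hand side of \eqref{FenchelMaxCouplingIntro} expresses $F(K_1, \dots, K_k) + \sum_{j=1}^m d_j \dim(E^j)$ as a pointwise infimum, over an admissible set of $(U_i, V_j)$ that does not depend on the $K_i$, of the expressions $\sum_i \langle U_i, K_i\rangle_{\HS} - \sum_j d_j \log \det V_j$. Each such expression is linear, and in particular continuous, in $(K_i)_{i=1}^k$, and any pointwise infimum of continuous functions is upper semicontinuous.

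For lower semicontinuity, I would use a block-diagonal scaling argument on the primal side. Given $K_i^{(n)} \to K_i$ in $\pd(E_i)$, fix a maximizer $K^* \in \Pi(K_1, \dots, K_k)$, whose existence is guaranteed by the Gaussian case of Proposition \ref{prop:MaxEntropyCouplingExists}. Set
$$A_i^{(n)} := (K_i^{(n)})^{1/2}(K_i)^{-1/2}, \qquad A^{(n)} := \diag(A_1^{(n)},\dots, A_k^{(n)}),$$
which is well-defined since both $K_i^{(n)}$ and $K_i$ lie in $\pd(E_i)$; continuity of the symmetric square root on $\pd(E_i)$ yields $A^{(n)} \to \id_{E_0}$. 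A short computation shows that $K^{(n)} := A^{(n)} K^* (A^{(n)})^\top$ satisfies $\pi_{E_i} K^{(n)} \pi_{E_i}^\top = K_i^{(n)}$, so that $K^{(n)} \in \Pi(K_1^{(n)}, \dots, K_k^{(n)})$, and $K^{(n)} \to K^*$ in $\psd(E_0)$. When $F(K_1, \dots, K_k) > -\infty$, the optimality of $K^*$ forces each $B_j K^* B_j^\top$ to lie in $\pd(E^j)$, and continuity of $\log\det$ on $\pd(E^j)$ then gives
$$\liminf_{n\to\infty} F(K_1^{(n)}, \dots, K_k^{(n)}) \geq \lim_{n\to\infty} \sum_{j=1}^m d_j \log\det(B_j K^{(n)} B_j^\top) = F(K_1,\dots, K_k).$$
The complementary case $F(K_1, \dots, K_k) = -\infty$ is vacuous.

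The main subtlety I anticipate is the possible singularity of $B_j K^* B_j^\top$, which would leave us outside the domain where $\log\det$ is continuous; but, as observed, this can only occur when $F(K_1, \dots, K_k) = -\infty$, where the lower semicontinuity bound is trivial. Beyond that, both directions are essentially automatic: the dual representation does the work for u.s.c., and for l.s.c.\ the argument reduces to conjugation by a block-diagonal matrix converging to the identity.
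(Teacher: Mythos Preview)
Your proof is correct and follows essentially the same approach as the paper: upper semicontinuity via the dual representation of Theorem~\ref{thm:FRdualQuadraticForms} (pointwise infimum of affine functions), and lower semicontinuity via block-diagonal conjugation by $\diag((K_i^{(n)})^{1/2}K_i^{-1/2})$. The only cosmetic difference is that the paper packages the l.s.c.\ step as ``$F$ is a pointwise supremum over the fixed set $\Sigma\in\Pi(\id_{E_1},\dots,\id_{E_k})$ of continuous functions of $(K_i)$,'' whereas you run the same conjugation sequentially on a fixed maximizer---but this is the same idea, since your $A^{(n)}K^*(A^{(n)})^\top$ is exactly $(K_d^{(n)})^{1/2}\Sigma^*(K_d^{(n)})^{1/2}$ for the $\Sigma^*$ corresponding to $K^*$.
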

\begin{proof}
By \eqref{FenchelMaxCouplingIntro}, we see that the mapping in \eqref{eq:maxCouplingsContPro} is a pointwise infimum of functions that are affine in $(K_i)_{i=1}^k$, so it follows that it is upper semi-continuous on $\prod_{i=1}^k \pd(E_i)$.  On the other hand,  each $K\in \Pi(K_1, \dots, K_k)$ can be factored as  
$K= K^{1/2}_d \Sigma  K^{1/2}_d$, for $K^{1/2}_d := \operatorname{diag}(K^{1/2}_1, \dots, K^{1/2}_k)$ and  $\Sigma\in \Pi(\id_{E_1}, \dots, \id_{E_k})$.  Since the map $K_i \mapsto K_i^{1/2}$ is continuous on $ \pd(E_i)$, and determinants are also continuous,  it follows that \eqref{eq:maxCouplingsContPro} is a pointwise supremum of continuous functions.  As such, it is  lower semi-continuous, completing the proof.  
\end{proof}

\subsubsection{Convexity properties of $D_g(\mathbf{c},\mathbf{d},\mathbf{B})$} \label{sec:GeoConvex}

For  $(\mathbf{d},\mathbf{B})$ fixed,  define the   function $F: \mathbb{R}^k \times\prod_{i=1}^k\pd (E_i) \to \mathbb{R}\cup\{-\infty\}$ via
\begin{align*}
F\left(\mathbf{c},  (K_i)_{i=1}^k\right) &:= \max_{K \in \Pi(K_1, \dots, K_k)}   \sum_{j=1}^md_j \log \det(B_j K B_j^T)-\sum_{i=1}^k c_i \log \det(K_i)  .
\end{align*}
The motivation for the above definition is that we have
\begin{align}
-2 D_g(\mathbf{c},\mathbf{d},\mathbf{B}) =  \inf_{ (K_i)_{i=1}^k \in \prod_{i=1}^k\pd (E_i)}   F\left(\mathbf{c}, (K_i)_{i=1}^k\right)\label{eq:DgFromF}
\end{align}
by definition of $D_g(\mathbf{c},\mathbf{d},\mathbf{B})$ and the fact that the scaling condition \eqref{eq:ScalingCond} is a necessary condition for finiteness of $D_g(\mathbf{c},\mathbf{d},\mathbf{B})$.  The optimization problem above is not convex in the $K_i$'s, however it is \emph{geodesically-convex}.  This property was mentioned to the second named author by Jingbo~Liu   in a discussion of the geodesically convex  formulation of the Brascamp--Lieb constant \cite{liu2019private,Sra2018}.  We assume the following argument, which extends that for the Brascamp--Lieb constant, was what he had in mind, so we credit the observation to him.

Let us first explain what is meant by geodesic convexity.  Given a metric space $(M,\rho)$ and points $x,y\in M$, a geodesic is a path $\gamma : [0,1] \to M$ with $\gamma(0)=x$, $\gamma(1)=y$ and
$$
d_M\left( \gamma(t_1),\gamma(t_2) \right) = |t_1-t_2| \rho(x,y), \hspace{5mm}\forall t_1,t_2\in [0,1]. 
$$
A function $f:M\to \mathbb{R}$ is   {geodesically-convex} if, for any geodesic $\gamma$, 
$$
f(\gamma(t)) \leq t f(\gamma(0)) + (1-t) f(\gamma(1)), \hspace{5mm}\forall t\in [0,1]. 
$$
The space $(M,\rho)$ is a unique geodesic metric space if every two points $x,y\in M$ are joined by a unique geodesic.

This is relevant to us as follows. For a Euclidean space $E$, the space $(\pd (E),\delta_2)$ is a unique geodesic metric space, where for $A,B\in \pd (E)$, 
$$
t\in [0,1] \mapsto  A\#_tB: = A^{1/2}(A^{-1/2}B A^{-1/2})^t A^{1/2} 
$$
is the unique geodesic joining $A$ and $B$ with respect to the metric
$$
\delta_2(A,B):= \left( \sum_{i=1}^{\dim(E)} \log(\lambda_i(A^{-1}B))^2 \right)^{1/2}  .
$$
The matrix $A\#B := A\#_{1/2} B$ is  referred to as  the {geometric mean} of $A,B\in \pd (E)$.  

The topology on $\pd (E)$ generated by $\delta_2$ is the usual one, in the sense that $\delta_2(A_n,A)\to 0$ if and only if $\|A_n - A\|_{\HS}\to 0$.  Hence, there are no subtleties with regards to the notions of continuity,  etc. In particular, if $f:\pd (E)\to \mathbb{R}$ is continuous and {geodesically midpoint-convex}, i.e.,  
$$
f(A\#B) \leq \frac{1}{2} f(A) + \frac{1}{2} f(B), \hspace{5mm}A,B\in \pd (E), 
$$
then it is geodesically convex.

\begin{theorem} \label{thm:FunctionalPropertiesDg}Fix $(\mathbf{d},\mathbf{B})$.  
\begin{enumerate}[(i)]
\item The function $\mathbf{c} \mapsto D_g(\mathbf{c},\mathbf{d},\mathbf{B})$ is   convex and lower semi-continuous. 
\item For fixed $\mathbf{c}$, the function $(K_i)_{i=1}^k  \mapsto   F\left(\mathbf{c},   (K_i)_{i=1}^k\right)$
is geodesically-convex and continuous on $\prod_{i=1}^k\pd (E_i)$.
\end{enumerate}
\end{theorem}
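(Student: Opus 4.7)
Part (i) is immediate from the definition:
$$D_g(\mathbf{c},\mathbf{d},\mathbf{B}) = \sup_{(Z_i)\in \prod_i \mathcal{G}(E_i)}\Bigl[\sum_{i=1}^k c_i h(Z_i) - \max_{Z\in\Pi(Z_1,\dots,Z_k)}\sum_{j=1}^m d_j h(B_j Z)\Bigr]$$
is a pointwise supremum of functions affine in $\mathbf{c}$ (the coefficients $h(Z_i)$ and the bracketed max-entropy value are independent of $\mathbf{c}$). Any pointwise supremum of affine functions is convex and lower semi-continuous, so part (i) follows.

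For part (ii), continuity of $F(\mathbf{c},\cdot)$ is the easy half: Corollary \ref{cor:ContinuityOfMaxDet} handles the max-det term, and continuity of $\log\det$ handles $(K_i)\mapsto\sum_i c_i\log\det K_i$. By continuity, geodesic convexity reduces to geodesic midpoint-convexity. Because $\log\det(A \# B) = \tfrac12(\log\det A + \log\det B)$, the summand $-\sum_i c_i \log\det K_i$ is geodesically affine and contributes trivially. The task therefore reduces to showing that
$$G((K_i)) := \max_{K\in\Pi(K_1,\dots,K_k)}\sum_{j=1}^m d_j \log\det(B_j K B_j^T)$$
is geodesically midpoint-convex in $(K_i)$.

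My plan is to adapt the geodesic-convexity argument for the Brascamp--Lieb constant credited in the text to Liu. Two ingredients are central: (a) \emph{Ando's matrix AM--GM inequality}, $L(A\# B)L^T \preceq (LAL^T)\#(LBL^T)$, valid for any linear $L$ and PSD $A,B$, and (b) the geodesic affinity and monotonicity of $\log\det$. Applied with $L=B_j$, these yield $\log\det(B_j(K^{(0)}\# K^{(1)})B_j^T) \leq \tfrac12\bigl(\log\det(B_j K^{(0)}B_j^T) + \log\det(B_j K^{(1)}B_j^T)\bigr)$ for any PSD $K^{(0)},K^{(1)}$. Applied with $L=\pi_i$, they show that the geodesic midpoint of any two couplings has $\pi_i$-marginals $\preceq K_i^{(0)}\# K_i^{(1)}$, hence lies in the \emph{relaxed} feasible set at the midpoint; and monotonicity of $\log\det$ further ensures that $G$ coincides with its relaxed-constraint variant $\{K:\pi_i K\pi_i^T \preceq K_i\}$.

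The main obstacle, and where the real technical work lies, is that starting from endpoint maximizers $K^{(0)},K^{(1)}$ and taking the geometric mean produces only a \emph{lower} bound on $G$ at the midpoint via $K^{(0)}\# K^{(1)}$, rather than the upper bound required for geodesic convexity. To produce the upper bound, my plan is to pass to the Fenchel--Rockafellar dual (Theorem \ref{thm:FRdualQuadraticForms}), which writes $G((K_i)) + \sum_j d_j\dim(E^j) = \inf_{(U,V)\in\mathcal{C}}\bigl[\sum_i\langle U_i,K_i\rangle - \sum_j d_j\log\det V_j\bigr]$ over a set $\mathcal{C}$ independent of $(K_i)$. For each feasible $(U,V)$ with $U_i\succeq 0$, the matrix AM--GM $\langle U_i,A\# B\rangle\leq \tfrac12(\langle U_i,A\rangle+\langle U_i,B\rangle)$ renders the bracketed expression geodesically convex in $(K_i)$. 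Since infima of geodesically convex functions are not automatically geodesically convex, the most delicate step will be to exploit the specific block-operator structure of the constraint $\sum_j d_j B_j^T V_j B_j \preceq \operatorname{diag}(U_i)$---likely via a reparametrization of the dual variables compatible with the geodesic structure on $\prod_i\pd(E_i)$---so that the midpoint inequality survives passage through the infimum.
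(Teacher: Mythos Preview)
Part (i) and the reductions opening part (ii) are correct and coincide with the paper's argument; your pivot to the Fenchel--Rockafellar dual is also the paper's route, and you have correctly isolated the real difficulty: an infimum of geodesically convex functions need not be geodesically convex. The gap is in your proposed resolution. Showing that $f_{(U,V)}((K_i))=\sum_i\langle U_i,K_i\rangle_{\HS}-\sum_j d_j\log\det V_j$ is geodesically convex in $(K_i)$ for \emph{fixed} $(U,V)$, followed by an unspecified ``reparametrization of the dual variables,'' does not supply the missing mechanism. What the paper actually uses is that the dual objective is \emph{jointly} geodesically midpoint-convex in $\bigl((K_i),(U_i),(V_j)\bigr)$ and that the dual feasible set $\mathcal{C}=\bigl\{(U_i),(V_j):\operatorname{diag}(U_i)\succeq\sum_j d_j B_j^T V_j B_j\bigr\}$ is itself closed under componentwise geometric means. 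Once both hold, partial minimization of a jointly geodesically convex function over a geodesically convex set is geodesically convex in the remaining variables, by the usual one-line argument.

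Two specific ingredients are therefore missing from your outline. First, joint convexity of the pairing term requires the \emph{two-sided} Cauchy--Schwarz $\langle U^{(1)},K^{(1)}\rangle_{\HS}+\langle U^{(2)},K^{(2)}\rangle_{\HS}\ge 2\langle U^{(1)}\#U^{(2)},K^{(1)}\#K^{(2)}\rangle_{\HS}$ of Proposition~\ref{prop:GeoMeanProperties}(ii), not the one-sided special case with $U$ held fixed that you quote. Second, geodesic convexity of $\mathcal{C}$ is where the constraint structure enters, and it needs Ando's inequality for a general \emph{positive linear map} rather than only the congruence $L(\cdot)L^T$ you invoke: writing $\Phi(V)=\sum_j d_j B_j^T\pi_{E^j}V\pi_{E^j}^T B_j$, monotonicity plus Ando give $\operatorname{diag}(U_i^{(1)}\#U_i^{(2)})\succeq\Phi(\operatorname{diag}(V_j^{(1)}))\#\Phi(\operatorname{diag}(V_j^{(2)}))\succeq\Phi(\operatorname{diag}(V_j^{(1)}\#V_j^{(2)}))$. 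There is a technical wrinkle you should anticipate: this $\Phi$ need not carry $\pd(E^0)$ into $\pd(E_0)$, so the paper first restricts to strictly feasible $(U,V)$ and perturbs to $\Phi_\epsilon(V)=\Phi(V)+\epsilon\,\Tr(V)\,\id_{E_0}$, which is strictly positive; continuity of the dual objective in the $U_i$ then recovers the unrestricted infimum.
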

\begin{remark}
It may be the case that $D_g(\mathbf{c},\mathbf{d},\mathbf{B})=+\infty$ for each $\mathbf{c}$, e.g., if some $B_j$ fails to be surjective.
\end{remark}

Before the proof, we recall a few basic facts about the geometric mean $A\#B$.  A linear transformation $\Phi : \sym(E)\to \sym(E')$  is said to be \emph{positive} if it sends $\pd(E)$ into $\pd(E')$.
\begin{proposition}\label{prop:GeoMeanProperties}
Let $E,E'$ be Euclidean spaces.  For $A_1,A_2,B_1,B_2 \in \pd (E)$, the following hold.
\begin{enumerate}[(i)] 
\item (Monotone Property) If $A_1\geq B_1$ and $A_2\geq B_2$, then   $(A_1\#A_2)\geq (B_2\#B_2)$.
\item (Cauchy--Schwarz) We have
$$
\langle A_1,B_1  \rangle_{\HS}+ \langle A_2,B_2  \rangle_{\HS} \geq 2 \langle (A_1\#A_2), (B_1\#B_2)\rangle_{\HS}.
$$
\item (Ando's inequality) If $\Phi : \sym (E)\to \sym (E')$ is a positive linear map, then
$$
\Phi(A_1\#A_2) \leq \Phi(A_1)\#\Phi(A_2).%
$$
\item (Geodesic linearity of $\log\det$) It holds that
$$
\log\det(A_1 \# A_2) = \frac{1}{2}\log\det(A_1) +  \frac{1}{2}\log\det(A_2). 
$$
\end{enumerate}
\end{proposition}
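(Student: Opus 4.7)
The plan is to derive all four properties from a single variational characterization of the geometric mean, namely
\begin{equation*}
A\#B = \max\left\{X \in \sym(E) : \begin{pmatrix} A & X \\ X & B \end{pmatrix} \geq 0 \right\},
\end{equation*}
where the maximum is taken in the Loewner (operator) order. Before applying it, I would establish this characterization: by Schur complement, the block PSD condition is equivalent to $B \geq X A^{-1} X$, and the substitution $Y := A^{-1/2} X A^{-1/2}$ converts it to $Y^2 \leq A^{-1/2} B A^{-1/2}$. The matrix $Y_* := (A^{-1/2} B A^{-1/2})^{1/2}$ achieves equality, and for any other feasible symmetric $Y$, operator monotonicity of the square root yields $|Y| \leq Y_*$, hence $Y \leq |Y| \leq Y_*$. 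Undoing the substitution gives $X \leq A \# B$, with equality at $X = A\#B$.

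With the characterization in hand, property (i) is immediate: $A_i \geq B_i$ combined with PSDness of $\begin{pmatrix} B_1 & B_1\#B_2 \\ B_1\#B_2 & B_2 \end{pmatrix}$ forces $\begin{pmatrix} A_1 & B_1\#B_2 \\ B_1\#B_2 & A_2 \end{pmatrix} \geq 0$, exhibiting $B_1\#B_2$ as feasible in the variational problem for $A_1 \# A_2$, and hence $B_1\#B_2 \leq A_1\#A_2$. For property (iii), I would apply $\Phi$ blockwise to $\begin{pmatrix} A_1 & A_1\#A_2 \\ A_1\#A_2 & A_2 \end{pmatrix} \geq 0$ to get $\begin{pmatrix} \Phi(A_1) & \Phi(A_1\#A_2) \\ \Phi(A_1\#A_2) & \Phi(A_2) \end{pmatrix} \geq 0$, which by the variational characterization yields $\Phi(A_1\#A_2) \leq \Phi(A_1)\#\Phi(A_2)$. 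The required positivity preservation of the blockwise extension is automatic for completely positive $\Phi$ (sufficient for this paper's applications, where $\Phi$ is typically compression by a linear map), with the general positive case due to Ando.

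Property (ii) follows by the observation that the ``twisted'' matrix $N := \begin{pmatrix} B_1 & -B_1\#B_2 \\ -B_1\#B_2 & B_2 \end{pmatrix}$ is also PSD (the Schur condition is invariant under $X \mapsto -X$), and so is $M := \begin{pmatrix} A_1 & A_1\#A_2 \\ A_1\#A_2 & A_2 \end{pmatrix}$. Since $\langle M, N\rangle_{\HS} \geq 0$ for any two PSD matrices, expanding the inner product yields precisely the claimed inequality. Finally, property (iv) is a direct computation from $A_1 \# A_2 = A_1^{1/2}(A_1^{-1/2} A_2 A_1^{-1/2})^{1/2} A_1^{1/2}$, giving $\det(A_1\#A_2) = \det(A_1)^{1/2}\det(A_2)^{1/2}$, and taking logarithms.

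The main obstacle is establishing the variational characterization cleanly, in particular handling non-PSD symmetric $X$ via operator monotonicity of the square root. Once the characterization is in place, the remaining properties reduce to short calculations or slick choices of test block matrix, with (ii) hinging on the observation that flipping the sign of the off-diagonal block preserves positivity.
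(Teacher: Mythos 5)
Your proposal is correct, and it takes a genuinely different route from the paper. The paper proves almost nothing directly: (i) is cited to Lawson--Lim, (iii) is cited to Ando's Theorem 3(i), (iv) is declared trivial, and only (ii) gets an argument, namely a congruence change of variables reducing to the case $B_1=\id_E$ followed by the ordinary Cauchy--Schwarz inequality for $\|\cdot\|_{\HS}$ and AM--GM. You instead derive everything from the Pusz--Woronowicz/Ando block-matrix characterization $A\#B=\max\{X\in\sym(E): \left(\begin{smallmatrix} A & X\\ X& B\end{smallmatrix}\right)\geq 0\}$, which you establish correctly (the step $Y\leq |Y|\leq Y_*$ via operator monotonicity of the square root is exactly what is needed to handle non-PSD feasible $X$). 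Your proof of (ii) --- pairing the PSD matrix for $(A_1,A_2)$ against the sign-flipped PSD matrix for $(B_1,B_2)$ and using $\langle M,N\rangle_{\HS}\geq 0$ --- is arguably cleaner than the paper's reduction. The one point requiring care is (iii): applying $\Phi$ blockwise presupposes $2$-positivity, which fails for general positive maps (e.g.\ the transpose), and you correctly flag this, noting that complete positivity covers the paper's actual $\Phi$ (a sum of congruences plus a trace term) while deferring the general positive case to Ando --- which is precisely what the paper itself does. The trade-off is that your argument is self-contained modulo operator monotonicity of $t\mapsto\sqrt{t}$, at the cost of more setup, whereas the paper is shorter on the page by outsourcing (i) and (iii) to the literature. (Incidentally, the statement of (i) contains a typo: $(B_2\#B_2)$ should read $(B_1\#B_2)$; you prove the intended claim.)
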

\begin{proof}
The monotonicity property can be found, e.g., in \cite[p.~802]{Lawson2001}.  By a change of variables using \cite[Lem.~3.1]{Lawson2001}  and \cite[Cor.~2.1(ii)]{ando79}, it suffices to prove (ii) under the assumption that $B_1 = \id_E$.  In particular, Cauchy--Schwarz gives
\begin{align*}
 |\langle (A_1\#A_2), (\id_E\#B_2)\rangle_{\HS} |^2&= 
  |\langle (A_2^{-1/2} A_1 A_2^{-1/2})^{1/2}A_2^{1/2} , A_2^{1/2}B_2^{1/2}\rangle_{\HS} |^2\\
&\leq   \| (A_2^{-1/2} A_1 A_2^{-1/2})^{1/2}A_2^{1/2} \|_{\HS}  \| A_2^{1/2}B_2^{1/2}\|_{\HS} \\
&=\langle A_1, \id_E \rangle_{\HS}  \langle A_2, B_2 \rangle_{\HS} .
\end{align*}
Thus, the claim follows by taking square roots of both sides and invoking the AM-GM inequality $\sqrt{ab}\leq (a+b)/2$ for $a,b\geq 0$.  Ando's inequality can be found in \cite[Thm.~3(i)]{ando79}.   Claim (iv) is trivial. 
\end{proof}

Theorem \ref{thm:FunctionalPropertiesDg} now follows as an easy consequence of the above properties and Theorem \ref{thm:FRdualQuadraticForms}. 
\begin{proof}[Proof of Theorem \ref{thm:FunctionalPropertiesDg}]  Claim (i) follows immediately from \eqref{eq:DgFromF}, since $-D_g(\mathbf{c},\mathbf{d},\mathbf{B})$ is a pointwise infimum of functions that are affine  in $\mathbf{c}$.

To prove (ii), we note that geodesic-linearity of $\log\det$ implies it suffices to show geodesic midpoint-convexity of the continuous (by Corollary  \ref{cor:ContinuityOfMaxDet}) function
\begin{align}
  (K_i)_{i=1}^k \mapsto  \max_{K \in \Pi(K_1, \dots, K_k) }   \sum_{j=1}^md_j \log \det(B_j K B_j^T). \label{eq:maxCouplingsCont}
\end{align}
Invoking Theorem \ref{thm:FRdualQuadraticForms}, this is the same as establishing geodesic-convexity of  
\begin{align}
 (K_i)_{i=1}^k \mapsto \inf_{(U_i,V_j)_{1\leq i\leq k, 1\leq j \leq m}}  \left( 
 \sum_{i=1}^k    \langle U_i, K_i\rangle_{\HS} - \sum_{j=1}^m d_j \log \det V_j\right) , \label{FenchelMaxCouplingGC}
\end{align}
where the infimum is over $U_i\in \pd(E_i),1\leq i\leq k$ and $V_j\in \pd(E^j), 1\leq j\leq m$ satisfying  
\begin{align}
 \operatorname{diag}(  U_1, \dots,   U_k) \geq \sum_{j=1}^m d_j B_j^T V_j B_j . \label{eq:MinMaxOperatorHypothesisGC}
\end{align}
For $\ell\in \{1,2\}$, let $U^{(\ell)}_i\in \pd(E_i),1\leq i\leq k$ and $V^{(\ell)}_j\in \pd(E^j), 1\leq j\leq m$ satisfy \eqref{eq:MinMaxOperatorHypothesisGC}   {with strict inequality}.  As such, there exists $\epsilon>0$ sufficiently small such that 
\begin{align*}
 \operatorname{diag}(  U^{(\ell)}_1, \dots,   U^{(\ell)}_k) \geq &\sum_{j=1}^m d_j B_j^T V^{(\ell)}_j B_j +\epsilon  \sum_{j=1}^m \Tr(V^{(\ell)}_j) \id_{E_0}, \hspace{5mm}\ell\in \{1,2\}.%
\end{align*}
Define the positive  linear map $\Phi :\pd(E^0) \to \pd(E_0)$ via 
$$
\Phi(V) := \sum_{j=1}^m d_j B_j^T \pi_{E^j}V\pi_{E_j}^T B_j  + \epsilon \Tr(V) \id_{E_0},\hspace{5mm}V\in \pd(E^0).
$$
By the monotone property and Ando's inequality in Proposition \ref{prop:GeoMeanProperties},  
\begin{align*}
\operatorname{diag}(  U^{(1)}_1\#U^{(2)}_1, \dots,   U^{(1)}_k\#U^{(2)}_k) &\geq \Phi\left( \operatorname{diag}(  V^{(1)}_1, \dots,   V^{(1)}_m)   \right) \#\Phi\left( \operatorname{diag}(  V^{(2)}_1, \dots,   V^{(2)}_m)   \right) \\
&\geq \Phi\left( \operatorname{diag}(  V^{(1)}_1\#V^{(2)}_1 , \dots,   V^{(1)}_m\#V^{(2)}_m)   \right) \geq \sum_{j=1}^m d_j B_j^T (V^{(1)}_j\#V^{(2)}_j) B_j .
\end{align*}
In particular, $(U^{(1)}_i\# U^{(2)}_i)\in \pd(E_i),1\leq i\leq k$ and $(V^{(1)}_j\# V^{(2)}_j)\in \pd(E^j)$, $1\leq j\leq m$ satisfy \eqref{eq:MinMaxOperatorHypothesisGC}.  Therefore, let $ (K^{(\ell)}_i)_{i=1}^k\in \prod_{i=1}^k\pd (E_i)$ and use Proposition \ref{prop:GeoMeanProperties} to write 
\begin{align*}
&\frac{1}{2}\sum_{\ell\in \{1,2\}} \left( 
 \sum_{i=1}^k    \langle U^{(\ell)}_i, K^{(\ell)}_i\rangle_{\HS} - \sum_{j=1}^m d_j \log \det V^{(\ell)}_j\right)\\
 &\geq  
 \sum_{i=1}^k    \langle ( U^{(1)}_i\#U^{(2) }_i ) , ( K^{(1)}_i\#K^{(2) }_i )\rangle_{\HS} - \sum_{j=1}^m d_j \log \det (V^{(1)}_j \# V^{(2)}_j) \\
 &\geq  
 \inf_{(U_i,V_j)_{1\leq i\leq k, 1\leq j \leq m}}  \left( 
 \sum_{i=1}^k    \langle U_i, ( K^{(1)}_i\#K^{(2) }_i ) \rangle_{\HS} - \sum_{j=1}^m d_j \log \det V_j\right) .
\end{align*}
By continuity of the objective in \eqref{FenchelMaxCouplingGC} with respect to the $U_i$'s, the value of the infimum in  \eqref{FenchelMaxCouplingGC} remains unchanged if we take infimum over $U_i$'s and $V_j$'s satisfying   \eqref{eq:MinMaxOperatorHypothesisGC} with strict inequality.  Hence, by the arbitrary choice of $U^{(\ell)}_i\in \pd(E_i),1\leq i\leq k$ and $V^{(\ell)}_j\in \pd(E^j), 1\leq j\leq m$ subject to \eqref{eq:MinMaxOperatorHypothesisGC} with strict inequality, geodesic midpoint-convexity of \eqref{FenchelMaxCouplingGC} is proved.  
\end{proof}

\subsubsection{Sion's theorem for geodesic metric spaces}

We will need the following version of Sion's minimax theorem, found in  \cite{Zhang2022}.  
 \begin{theorem}[Sion's theorem in geodesic metric spaces]\label{thm:SionGeodesic}
Let $(M,d_M)$ and $(N,d_N)$ be finite-dimensional unique geodesic metric spaces. Suppose $\mathcal{X}\subset M$ is a compact and geodesically convex set, $\mathcal{Y} \subset N$ is a geodesically convex set. If following conditions hold for $f : \mathcal{X} \times \mathcal{Y} \to \mathbb{R}$:
\begin{enumerate}[1.]
\item  $f (\cdot, y)$ is geodesically-convex and \lsc for each $y\in \mathcal{Y}$; 
\item $f (x, \cdot)$ is geodesically-concave and \usc for each $x\in \mathcal{X}$, 
\end{enumerate}
then
$$
\min_{x\in \mathcal{X}} \sup_{y\in \mathcal{Y}}  f(x,y) =  \sup_{y\in \mathcal{Y}} \min_{x\in \mathcal{X}} f(x,y).
$$
 \end{theorem}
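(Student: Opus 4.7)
The trivial direction $\min_x \sup_y f(x,y) \geq \sup_y \min_x f(x,y)$ holds with no hypotheses. The content is the reverse inequality, which I would prove by adapting Komiya's elementary proof of the classical Sion theorem, systematically replacing linear convex combinations by geodesic interpolations. The unique-geodesic hypothesis on $(M, d_M)$ and $(N, d_N)$ and the geodesic convexity of $\mathcal{X}, \mathcal{Y}$ are precisely what make this transfer go through.

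The plan is to argue by contradiction: assume $\sup_y \min_x f(x, y) < \lambda < \min_x \sup_y f(x, y)$ for some $\lambda \in \mathbb{R}$. By the right inequality and \lsc of $f(\cdot, y)$, the sets $U_y := \{x \in \mathcal{X} : f(x, y) > \lambda\}$ are open and cover the compact $\mathcal{X}$; extract a finite subcover $U_{y_1}, \ldots, U_{y_n}$. The objective is to prune this down to a single set $U_{y^*}$ via iterated two-point reductions, since a trivial cover $U_{y^*} = \mathcal{X}$ would yield $\min_x f(x, y^*) \geq \lambda$, contradicting the left inequality.

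The reduction relies on a two-point lemma: for any compact geodesically convex $C \subset \mathcal{X}$ with $C \subset U_{y_0} \cup U_{y_1}$, there exists $t^* \in [0, 1]$ such that $C \subset U_{\gamma(t^*)}$, where $\gamma : [0, 1] \to \mathcal{Y}$ is the unique geodesic joining $y_0$ and $y_1$ in $N$ (lying in $\mathcal{Y}$ by geodesic convexity). Given the lemma, the induction step takes the subcover $\{U_{y_i}\}_{i=1}^n$ with $n \geq 2$, sets $C := \{x \in \mathcal{X} : \max_{1 \leq i \leq n-2} f(x, y_i) \leq \lambda\}$ (closed by \lsc of each $f(\cdot, y_i)$ and geodesically convex as an intersection of geodesically convex sublevel sets), observes $C \subset U_{y_{n-1}} \cup U_{y_n}$, and applies the lemma to obtain $y' = \gamma(t^*) \in \mathcal{Y}$ with $C \subset U_{y'}$; then $\{U_{y_1}, \ldots, U_{y_{n-2}}, U_{y'}\}$ is a subcover of size $n-1$.

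The two-point lemma is the main obstacle. For its proof, consider the sets $F_t := \{x \in C : f(x, \gamma(t)) \leq \lambda\}$: each is closed and geodesically convex in $C$, hence connected as a geodesically convex subset of a unique geodesic space. One has $F_0 \cap F_1 = \emptyset$, since any common point would violate $C \subset U_{y_0} \cup U_{y_1}$, and geodesic concavity of $f(x, \cdot)$ gives the inclusion $F_t \subset F_0 \cup F_1$ for $t \in (0, 1)$: any $x \in F_t$ satisfies $(1-t) f(x, y_0) + t f(x, y_1) \leq f(x, \gamma(t)) \leq \lambda$, which forces $f(x, y_0) \leq \lambda$ or $f(x, y_1) \leq \lambda$. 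Since $F_t$ is connected and contained in the disjoint union $F_0 \sqcup F_1$, each non-empty $F_t$ lies entirely in $F_0$ or entirely in $F_1$. Partitioning $[0, 1]$ according to this dichotomy (with $0$ on the $F_0$ side and $1$ on the $F_1$ side), a continuity argument based on the semi-continuity of $f$ and the compactness of $C$ produces a transition point $t^*$ at which $F_{t^*} = \emptyset$, i.e., $C \subset U_{\gamma(t^*)}$. The delicate point is handling the asymmetry between \lsc in $x$ (giving closed sublevel sets of $f(\cdot, y)$) and \usc in $y$ (giving closed super-level sets of $f(x, \cdot)$); this is typically resolved by working with strict parameters $\lambda' < \lambda$ and taking a limit $\lambda' \uparrow \lambda$, exploiting compactness to pass to the boundary.
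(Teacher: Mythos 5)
The paper does not prove this theorem; it imports it verbatim from the cited reference [Zhang2022], so there is no in-paper proof to compare against. Your outline is essentially the argument that reference gives: Komiya's elementary proof of Sion's theorem with linear interpolation replaced by geodesic interpolation, and all the pieces you name fit together correctly --- openness of the sets $U_y$ from \lsc of $f(\cdot,y)$, a finite subcover by compactness of $\mathcal{X}$, geodesic convexity and hence connectedness of sublevel sets in a unique geodesic space, and the induction down to a single $y^*$ via the two-point lemma. The only place where your sketch papers over real work is the final step of the two-point lemma: the ``transition point'' is not found by a direct continuity argument but by a connectedness argument on $[0,1]$ --- one assumes every $F_t$ is nonempty, shows the sets $\{t: F_t\subset F_0\}$ and $\{t: F_t\subset F_1\}$ are both closed, nonempty, and disjoint, and contradicts connectedness of the interval; proving closedness is exactly where the two-level device $\alpha<\beta$ and \usc of $f(x,\gamma(\cdot))$ enter (take $x$ in the smaller set $\{x\in C: f(x,\gamma(t))\le\alpha\}$, deduce $x\in F_{t_n}$ for large $n$, then use connectedness of $F_t$ to upgrade from one point to all of $F_t$). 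You have correctly identified that this device is needed, so I regard the proposal as a correct outline of the standard route rather than as containing a gap.
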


\subsection{Unconstrained comparisons}

With all the pieces in place, we can take a big step toward proving Theorem \ref{thm:GaussianComparisonConstrained} by first establishing the result in the unconstrained case.  Namely, the goal of this section is to prove the following.

\begin{theorem}\label{thm:GaussianComparisons}  Fix   $(\mathbf{d},\mathbf{B})$.  For any  $X_i   \in \mathcal{P}(E_i)$, $1\leq i \leq k$,    there exist   $Z_i \in \mathcal{G}(E_i)$ with $h(Z_i)= h(X_i)$ for $1\leq i\leq k$ such that 
\begin{align}
\max_{X\in \Pi(X_1, \dots, X_k)}\sum_{j=1}^m d_j h(B_j X) \geq  \max_{Z\in \Pi(Z_1, \dots, Z_k)}\sum_{j=1}^m d_j h(B_j Z). 
 \label{eq:maxEntComparison}
\end{align}
\end{theorem}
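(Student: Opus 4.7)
The plan is to prove Theorem \ref{thm:GaussianComparisons} by strong duality applied to the entropic forward-reverse Brascamp--Lieb inequality (Theorem \ref{thm:FRBLentropy}), with the geodesic Sion's theorem (Theorem \ref{thm:SionGeodesic}) supplying the key sup--inf swap.

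By Theorem \ref{thm:FRBLentropy}, for every $\mathbf{c}\in\mathbb{R}^k$ we have $\sum_i c_i h(X_i) - D_g(\mathbf{c},\mathbf{d},\mathbf{B}) \leq \max_X \sum_j d_j h(B_j X)$. Taking the supremum over $\mathbf{c}$ yields
\begin{align*}
\Phi := \sup_{\mathbf{c}\in\mathbb{R}^k}\Bigl( \sum_i c_i h(X_i) - D_g(\mathbf{c},\mathbf{d},\mathbf{B}) \Bigr) \leq \max_{X\in\Pi(X_1,\dots,X_k)} \sum_j d_j h(B_j X).
\end{align*}
The entire proof then reduces to identifying $\Phi$ with the primal value
\begin{align*}
P := \inf_{\substack{(K_i)\in\prod_i\pd(E_i)\\ h(N(0,K_i))=h(X_i)}} \max_{K\in\Pi(K_1,\dots,K_k)} \sum_j d_j h(N(0, B_j K B_j^T)),
\end{align*}
and showing the infimum is attained at some $(K_i^*)$: then $Z_i^* \sim N(0, K_i^*)$ realizes $h(Z_i^*)=h(X_i)$ and $\max_Z \sum_j d_j h(B_j Z) = P = \Phi \leq \max_X \sum_j d_j h(B_j X)$, finishing the proof.

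To prove the identity $\Phi = P$, I would apply Theorem \ref{thm:SionGeodesic} to the Lagrangian
\begin{align*}
L\bigl((K_i),\mathbf{c}\bigr) := \max_{K\in\Pi(K_1,\dots,K_k)} \sum_j d_j h(N(0, B_j K B_j^T)) - \sum_i c_i \bigl( h(N(0,K_i)) - h(X_i) \bigr).
\end{align*}
Unwinding definitions (and using Proposition \ref{prop:MaxEntropyCouplingExists} to observe that the inner max over couplings of Gaussian marginals is itself attained by a Gaussian coupling), one finds $\inf_{(K_i)} L = \sum_i c_i h(X_i) - D_g(\mathbf{c},\mathbf{d},\mathbf{B})$, while $\sup_{\mathbf{c}} L$ equals $\max_K \sum_j d_j h(N(0, B_j K B_j^T))$ on the entropy-constraint surface and $+\infty$ off it. The Sion hypotheses hold since $L$ is linear (hence geodesically concave and continuous) in $\mathbf{c}\in\mathbb{R}^k$, and, for fixed $\mathbf{c}$, is geodesically convex and continuous on $\prod_i\pd(E_i)$ with the geometric-mean metric $\delta_2$: the $-c_i h(N(0,K_i))$ summands are geodesically affine by Proposition \ref{prop:GeoMeanProperties}(iv), and the max-over-couplings term is geodesically convex by Theorem \ref{thm:FunctionalPropertiesDg}(ii) and continuous by Corollary \ref{cor:ContinuityOfMaxDet}.

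The main obstacle is that Theorem \ref{thm:SionGeodesic} requires the min-side domain to be compact, whereas $\prod_i \pd(E_i)$ is not. I would handle this by restricting the infimum to closed geodesic balls $\mathcal{X}_R := \{(K_i) : \delta_2(K_i, \id_{E_i}) \leq R\}$, applying Sion on each $\mathcal{X}_R$, and passing to the limit $R\to\infty$; convergence of the two sides of the resulting saddle identity to $\Phi$ and $P$ respectively uses the representation \eqref{eq:DgFromF}, lower-semicontinuity of $D_g$ (Theorem \ref{thm:FunctionalPropertiesDg}(i)), and geodesic convexity of the constraint surface $\{h(N(0,K_i))=h(X_i)\}$ (which holds because $\log\det$ is geodesically affine). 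The subtlest step is attainment of $P$: I would argue that a minimizing sequence $(K_i^{(n)})$ must remain $\delta_2$-bounded, since along any unbounded geodesic ray within the constraint surface, geodesic convexity of the objective forces it to be either constant (allowing selection of a cluster point on the surface) or unbounded above (contradicting minimization). Corollary \ref{cor:ContinuityOfMaxDet} then lets one pass to a $\|\cdot\|_{\HS}$-limit to obtain an attainer $(K_i^*)\in\prod_i\pd(E_i)$; controlling the degeneration of $B_j K B_j^T$ along such rays is the most delicate step and ultimately relies on the dimension and scaling conditions \eqref{eq:ScalingCond}--\eqref{eq:DimCond}.
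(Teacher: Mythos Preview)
Your high-level strategy---dualize via Theorem~\ref{thm:FRBLentropy} and then swap $\sup_{\mathbf c}$ and $\inf_{(K_i)}$ using the geodesic Sion theorem---is exactly the paper's. But you put compactness on the wrong side, and this creates two real gaps.

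\textbf{The compactness trick you are missing.} The paper does not try to compactify the $(K_i)$ side at all. Instead it observes (after the harmless normalizations $\sum_j d_j\dim(E^j)=1$ and $h(X_i)=\tfrac{\dim(E_i)}{2}\log(2\pi e)$) that $D_g(\mathbf c,\mathbf d,\mathbf B)<\infty$ forces the scaling condition~\eqref{eq:ScalingCond}, so the supremum over $\mathbf c$ may be restricted to the compact simplex $A=\{\mathbf c\ge 0:\sum_i c_i\dim(E_i)=1\}$. Sion (Theorem~\ref{thm:SionGeodesic}) then applies directly with $A$ as the compact variable and $\prod_i\pd(E_i)$ as the non-compact one; no limiting procedure is needed. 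After the swap, the inner $\max_{\mathbf c\in A}(-\sum_i c_i\log\det K_i)$ evaluates to $-\min_i \tfrac{\log\det K_i}{\dim(E_i)}$, and Theorem~\ref{thm:FRdualQuadraticForms} is used to pass from the constraint $\min_i\det K_i=1$ to $\det K_i=1$ for all $i$, yielding the clean identity~\eqref{matrixIdent}.

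\textbf{Why your $\mathcal X_R$ argument does not close.} Sion on $\mathcal X_R\times\mathbb{R}^k$ gives $\min_{\mathcal X_R}\sup_{\mathbf c}L=\sup_{\mathbf c}\min_{\mathcal X_R}L$ for each $R$. The left side decreases to $P$ as $R\to\infty$, hence so does the right side. But you need the right side to converge to $\Phi=\sup_{\mathbf c}\inf_{(K_i)}L$, which would require interchanging $\lim_{R}$ with $\sup_{\mathbf c}$; pointwise monotone convergence $\min_{\mathcal X_R}L(\cdot,\mathbf c)\downarrow \inf_{(K_i)}L(\cdot,\mathbf c)$ is not enough for that, and neither lower semicontinuity of $D_g$ nor the representation~\eqref{eq:DgFromF} supplies the missing uniformity. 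As written, your argument only reproves weak duality $P\ge\Phi$.

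\textbf{Attainment of $P$.} Your coercivity sketch (``either constant or unbounded above along geodesic rays'') does not hold for merely geodesically convex functions, and in fact $P$ can equal $-\infty$ (e.g.\ when some $B_j$ fails to be surjective), so $P$ need not be attained. The paper sidesteps this entirely: after~\eqref{matrixIdent} it picks $\mathbf c^\ast\in\arg\min_A D_g$ and invokes Theorem~\ref{thm:extImpliesGext}. If $(\mathbf c^\ast,\mathbf d,\mathbf B)$ is extremizable it is Gaussian-extremizable, producing $K_i$ with $\det K_i=1$ that attain the infimum in~\eqref{matrixIdent}; if it is not extremizable, \eqref{eq:DependsOnExtremizability} is strict, and one simply chooses $K_i$ with objective below the (finite) left-hand side. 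The degenerate case where the left-hand side is $-\infty$ is handled separately by noting some $B_j$ is not surjective, so the comparison is trivial. You use Theorem~\ref{thm:extImpliesGext} nowhere, and it is precisely the tool that replaces your attainment argument.
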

\begin{remark}
It is a part of the theorem that each maximum is attained.
\end{remark}

Before we start the proof, let's first describe the high-level idea.  To do this, recall that Lieb's form \cite{lieb1978} of the EPI is as follows:  For independent random vectors $X_1,X_2\in \mathcal{P}(\R)$ and any $\lambda\in (0,1)$,  
\begin{align}
h(\sqrt{\lambda} X_1 + \sqrt{1-\lambda} X_2  )\geq \lambda h(X_1)  + (1-\lambda) h(X_2).   \label{eq:introLieb}
\end{align}

Motivated by the similarity between the entropy power inequality and  the {B}runn--{M}inkowski inequality,  Costa and Cover \cite{costa1984similarity} reformulated \eqref{eq:introLieb} as the following concise Gaussian comparison\footnote{The comparison also holds in the multidimensional setting, distinguishing it from the Zamir--Feder inequality.}.
\begin{proposition}[Comparison form of Shannon--Stam inequality]
For independent random variables $X_1, X_2 \in \mathcal{P}(\R)$,  we have
\begin{align}
h(X_1 + X_2)\geq h(Z_1 + Z_2) ,\label{eq:EPIgaussComparison}
\end{align}
where $Z_1,Z_2$ are independent Gaussian random variables with variances chosen so that  $h(Z_i) = h(X_i)$.    
\end{proposition}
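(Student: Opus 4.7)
The plan is to prove \eqref{eq:EPIgaussComparison} by reducing it to the standard entropy-power form \eqref{eq:EPIstatement} of the Shannon--Stam EPI (with $n=1$). The content is purely algebraic: matching $h(Z_i)=h(X_i)$ pins down the variance of the centered Gaussian $Z_i$ uniquely, and sums of independent Gaussians are Gaussian, so $h(Z_1+Z_2)$ admits a closed-form expression in terms of $h(X_1)$ and $h(X_2)$.

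Concretely, I would first set $Z_i\sim N(0,\sigma_i^2)$ with $\sigma_i^2=\tfrac{1}{2\pi e}\,e^{2h(X_i)}$; the identity $h(N(0,\sigma^2))=\tfrac{1}{2}\log(2\pi e\,\sigma^2)$ then yields $h(Z_i)=h(X_i)$, as required. Since $Z_1$ and $Z_2$ are independent, $Z_1+Z_2\sim N(0,\sigma_1^2+\sigma_2^2)$, so a direct computation gives
$$
e^{2h(Z_1+Z_2)} \;=\; 2\pi e\,(\sigma_1^2+\sigma_2^2) \;=\; e^{2h(X_1)}+e^{2h(X_2)}.
$$
Applying \eqref{eq:EPIstatement} with $n=1$ then yields $e^{2h(X_1+X_2)}\geq e^{2h(X_1)}+e^{2h(X_2)}=e^{2h(Z_1+Z_2)}$, and taking logarithms finishes the proof.

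There is no real obstacle: the heavy lifting is done entirely by Shannon--Stam, and the proposition simply repackages the EPI as a Gaussian comparison. What is worth emphasizing, however, is that this clean equivalence is precisely what makes the comparison form a natural template for the much more general Theorem \ref{thm:GaussianComparisons}, where the closed-form evaluation of $h$ on a Gaussian coupling is no longer available and one must instead assemble the matching Gaussians via the preliminaries developed above (the dual formulation of the max-coupling entropy in Theorem \ref{thm:FRdualQuadraticForms}, the geodesic convexity of $F$ in Theorem \ref{thm:FunctionalPropertiesDg}, and Sion's minimax theorem, Theorem \ref{thm:SionGeodesic}).
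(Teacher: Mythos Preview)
Your proof is correct: setting $\sigma_i^2=\tfrac{1}{2\pi e}e^{2h(X_i)}$ and invoking the entropy-power form \eqref{eq:EPIstatement} immediately gives $e^{2h(X_1+X_2)}\geq e^{2h(X_1)}+e^{2h(X_2)}=e^{2h(Z_1+Z_2)}$, which is \eqref{eq:EPIgaussComparison}.

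The paper takes a slightly different route. Rather than starting from the entropy-power form \eqref{eq:EPIstatement}, it starts from Lieb's form \eqref{eq:introLieb}, rescales to obtain the one-parameter family
\[
c\,h(X_1)+(1-c)\,h(X_2)+\tfrac{1}{2}h_2(c)\leq h(X_1+X_2),\qquad c\in[0,1],
\]
and then \emph{optimizes} the left-hand side over $c$ to recover $h(Z_1+Z_2)$. Your derivation is the more direct one; the paper's is chosen deliberately because it exhibits the very mechanism that drives the proof of Theorem \ref{thm:GaussianComparisons}: a family of inequalities \eqref{eq:MainEntropyCouplingInequality} indexed by $\mathbf{c}$, whose optimal left-hand side is identified (via minimax) with the Gaussian quantity on the right of \eqref{eq:maxEntComparison}. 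You allude to this template in your final paragraph, but your computation does not actually display it, whereas the paper's does.
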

To understand how this comes about, observe that a change of variables in \eqref{eq:introLieb} yields the equivalent formulation 
$$
 c h(X_1) + (1-c) h(X_2) + \frac{1}{2}h_2(c) \leq h(X_1 + X_2),\hspace{5mm}\mbox{for all $c\in [0,1]$,}
$$
where $h_2(c):=  - c\log(c) - (1-c)\log(1-c)$  is the binary entropy function.  Since the RHS does not depend on $c$, we may maximize the LHS over $c\in [0,1]$, yielding \eqref{eq:EPIgaussComparison}.  Now, we draw the reader's attention to the formal similarity to \eqref{eq:MainEntropyCouplingInequality}.  Namely, we can apply the same logic to bound
\begin{align}
\sup_{\mathbf{c} \geq 0} \left\{ \sum_{i=1}^k c_i h(X_i) - D_g(\mathbf{c},\mathbf{d},\mathbf{B}) \right\} \leq \max_{X\in \Pi(X_1, \dots, X_k)}\sum_{j=1}^m d_j h(B_j X) . \label{eq:MainEntropyCouplingInequalityToOptimize}
\end{align}
The difficulty encountered is that, unlike $c\mapsto h_2(c)$, the function $\mathbf{c}\mapsto D_g(\mathbf{c},\mathbf{d},\mathbf{B})$ is not   explicit, complicating the optimization problem to be solved.  Nevertheless, the task can be accomplished with all the ingredients we have at hand.  

\begin{proof}[Proof of Theorem \ref{thm:GaussianComparisons}]  We start by noting each maximum is attained due to Proposition \ref{prop:MaxEntropyCouplingExists}.   Now, without loss of generality, we can   assume $\mathbf{d}$ is scaled  so that 
 \begin{align}
 \sum_{j=1}^m d_j \dim(E^j) = 1.\label{eq:normalized}
 \end{align}
 Also, since there are no qualifications on the linear maps in $\mathbf{B}$, a simple rescaling argument reveals that we can assume without loss of generality that $h(X_i)=\frac{\dim(E_i)}{2}\log(2\pi e)$; this will allow us to consider $Z_i\sim N(0,K_i)$ with $\det(K_i)=1$ for each $1\leq i\leq k$.  Thus, by Theorem \ref{thm:FRBLentropy}, we have
\begin{align}
\max_{X\in \Pi(X_1, \dots, X_k)}\sum_{j=1}^m d_j h(B_j X) &\geq \sum_{i=1}^k c_i h(X_i) - D_g(\mathbf{c},\mathbf{d},\mathbf{B})=\frac{1}{2}\log(2\pi e)\sum_{i=1}^k c_i \dim(E_i)   - D_g(\mathbf{c},\mathbf{d},\mathbf{B}) \label{eq:quantityToBound}
\end{align}
for any $\mathbf{c}$.  Define the simplex 
$$A := \left\{\mathbf{c}\geq 0 : \sum_{i=1}^k c_i \dim(E_i) =  \sum_{j=1}^m d_j \dim(E^j) =1 \right\},$$
which is compact  and convex.
 By Theorem \ref{thm:FRBLentropy}, we have $D_g(\mathbf{c},\mathbf{d},\mathbf{B})<\infty$ only if $\mathbf{c}\in A$, so our task in maximizing the RHS of \eqref{eq:quantityToBound} is to compute
$$
\max_{\mathbf{c}\in A}- D_g(\mathbf{c},\mathbf{d},\mathbf{B}) = -\min_{\mathbf{c}\in A} D_g(\mathbf{c},\mathbf{d},\mathbf{B}), 
$$
where the use of $\max$ and $\min$ is justified, since $\mathbf{c} \mapsto D_g(\mathbf{c},\mathbf{d},\mathbf{B})$ is \lsc by Theorem \ref{thm:FunctionalPropertiesDg} and $A$ is compact. For $\mathbf{c}\in A$ and $(K_1,\dots,K_k)\in \prod_{i=1}^k\pd (E_i)$,  define  
$$
F\left(\mathbf{c}, (K_i)_{i=1}^k\right) := \max_{K \in \Pi(K_1, \dots, K_k)}   \sum_{j=1}^m d_j \log \det(B_j K B_j^T) - \sum_{i=1}^k c_i \log \det(K_i),
$$
which is the same as that in \eqref{eq:DgFromF}.   Theorem \ref{thm:FunctionalPropertiesDg} ensures that $F$ satisfies the hypotheses of Theorem \ref{thm:SionGeodesic}.  Thus, by an application of the latter and definition of $D_g(\mathbf{c},\mathbf{d},\mathbf{B})$, we have
\begin{align*}
\max_{\mathbf{c}\in A}- 2D_g(\mathbf{c},\mathbf{d},\mathbf{B}) &= \max_{\mathbf{c}\in A}~~ \inf_{ (K_i)_{i=1}^k \in \prod_{i=1}^k\pd (E_i)} F\left(\mathbf{c}, (K_i)_{i=1}^k\right)\\
&=\!\!\!\inf_{ (K_i)_{i=1}^k \in \prod_{i=1}^k\pd (E_i)} ~\max_{\mathbf{c}\in A} F\left(\mathbf{c}, (K_i)_{i=1}^k\right)\\
&=\!\!\!\inf_{ (K_i)_{i=1}^k \in \prod_{i=1}^k\pd (E_i)} ~\max_{K \in \Pi(K_1, \dots, K_k)}   \sum_{j=1}^m d_j \log \det(B_j K B_j^T)  - \min_{1 \leq i \leq k}\!\!\frac{\log\det(K_i)}{\dim(E_i)}\\
&=\!\!\! \inf_{ \substack{ (K_i)_{i=1}^k \in \prod_{i=1}^k\pd (E_i) :\\ \min_{1\leq i \leq k} \det(K_i) = 1}} ~\max_{K \in \Pi(K_1, \dots, K_k)}   \sum_{j=1}^m d_j \log \det(B_j K B_j^T) ,
\end{align*}
where the last line made use of the observation that the function
$$
(K_i)_{i=1}^k \mapsto \max_{K \in \Pi(K_1, \dots, K_k)}   \sum_{j=1}^m d_j \log \det(B_j K B_j^T)  - \min_{1 \leq i \leq k}\!\!\frac{\log\det(K_i)}{\dim(E_i)}
$$
is invariant to rescaling $(K_i)_{i=1}^k \mapsto (\alpha K_i)_{i=1}^k$ for $\alpha >0$ by \eqref{eq:normalized}.

Now, invoking Theorem \ref{thm:FRdualQuadraticForms}, we have 
\begin{align*}
&   \inf_{ \substack{ (K_i)_{i=1}^k \in \prod_{i=1}^k\pd (E_i) :\\ \min_{1\leq i \leq k} \det(K_i) = 1}} ~\max_{K \in \Pi(K_1, \dots, K_k)}   \sum_{j=1}^m d_j \log \det(B_j K B_j^T)\\
&=  \inf_{ \substack{ (K_i)_{i=1}^k \in \prod_{i=1}^k\pd (E_i) :\\ \min_{1\leq i \leq k} \det(K_i) = 1}} \inf_{(U_i)_{i=1}^k,(V_j)_{j=1}^m}  \left( 
 \sum_{i=1}^k    \langle U_i, K_i\rangle_{\HS} - \sum_{j=1}^m d_j \log \det V_j\right),
\end{align*}
where the second infimum  is over all $U_i\in \pd (E_i),1\leq i\leq k$ and $V_j\in \pd (E^j), 1\leq j\leq m$ satisfying  
\begin{align*}
\sum_{j=1}^m d_j B_j^T V_j B_j \leq \operatorname{diag}(  U_1, \dots,   U_k).  
\end{align*}
Written in this way, it evidently suffices to consider $\det(K_i) = 1$ for all $1\leq i\leq k$ in the last line, 
so we conclude
\begin{align}
\max_{\mathbf{c}\in A}- 2D_g(\mathbf{c},\mathbf{d},\mathbf{B})  =  \inf_{ \substack{ (K_i)_{i=1}^k \in \prod_{i=1}^k\pd (E_i) :\\ \det(K_i) = 1, 1\leq i\leq k}} ~\max_{K \in \Pi(K_1, \dots, K_k)}   \sum_{j=1}^m d_j \log \det(B_j K B_j^T). \label{matrixIdent}
\end{align}
Now, let $\mathbf{c^*} \in \arg\min_{\mathbf{c}\in A} D_g(\mathbf{c},\mathbf{d},\mathbf{B})$.  By \eqref{eq:quantityToBound} and \eqref{eq:normalized}, we have
\begin{align}
\max_{X\in \Pi(X_1, \dots, X_k)}\sum_{j=1}^m d_j h(B_j X) &\geq \frac{1}{2}\log(2\pi e)    - D_g(\mathbf{c^*},\mathbf{d},\mathbf{B}). \label{eq:DependsOnExtremizability}
\end{align}
If the LHS of \eqref{eq:DependsOnExtremizability} is equal to $-\infty$,  then it is easy to see that one of the $B_j$'s must fail to be surjective. Indeed, suppose each $B_j$ is surjective and  factor $B_j = R_j Q_j$, where $Q_j$ has orthonormal rows and $R_j$ is full rank.  Letting $Q^{\perp}_j$ denote the matrix with orthonormal rows and rowspace equal to the orthogonal complement of the rowspace of $Q_j$, for the independent coupling $X$ we have 
$$
\sum_{i=1}^k h(X_i) = h(X) =h(Q_j X, Q_j^{\perp} X) \leq  h(Q_j X) + h(Q_j^{\perp} X).%
$$
Since $h(Q_j^{\perp} X)$ is bounded from above due to finiteness of second moments and the LHS is finite by assumption, $h(Q_j X)$ is finite, and so is $h(B_j X)$. 
Therefore, \eqref{eq:maxEntComparison} holds trivially if the LHS of \eqref{eq:DependsOnExtremizability} is equal to $-\infty$, so we  assume henceforth that the LHS of \eqref{eq:DependsOnExtremizability}  is finite.  If $(\mathbf{c^*},\mathbf{d},\mathbf{B})$ is extremizable, then by Theorem \ref{thm:extImpliesGext} and \eqref{matrixIdent},     there exist Gaussians $Z^*_i\sim N(0,K_i)$ with $\det(K_i)=1$ such that
\begin{align*}
\max_{X\in \Pi(X_1, \dots, X_k)}\sum_{j=1}^m d_j h(B_j X) &\geq \frac{1}{2}\log(2\pi e)    - D_g(\mathbf{c^*},\mathbf{d},\mathbf{B})\\
&=\max_{Z\in \Pi(Z^*_1, \dots, Z^*_k)}\sum_{j=1}^m d_j h(B_j Z),
\end{align*}
where we used the  identity  $\frac{1}{2}\log(2\pi e) = \sum_{i=1}^k c_i^* h(X_i) =  \sum_{i=1}^k c_i^* h(Z^*_i)$.  On the other hand, if $(\mathbf{c^*},\mathbf{d},\mathbf{B})$ is not extremizable, then we have strict inequality in \eqref{eq:DependsOnExtremizability}, and it follows by \eqref{matrixIdent} that there are Gaussians $Z_i\sim N(0,K_i)$ with $\det(K_i)=1$ such that \eqref{eq:maxEntComparison} holds (with strict inequality, in fact).
\end{proof}

\subsection{Proof of Theorem \ref{thm:GaussianComparisonConstrained}}

With   Theorem \ref{thm:GaussianComparisons} at our disposal, it is a straightforward matter to self-strengthen it to produce Theorem \ref{thm:GaussianComparisonConstrained}. 

First, observe that lower semicontinuity of relative entropy implies $X\in \Pi(X_1, \dots, X_k) \mapsto I_S(X)$ is weakly lower semicontinuous, and therefore $\Pi(X_1, \dots, X_k;\nu)$ is a compact subset of $\Pi(X_1, \dots, X_k)$ when equipped with the weak topology.  Hence, repeating the argument in the Proposition \ref{prop:MaxEntropyCouplingExists}, we find that each  maximum is achieved the statement of the Theorem.

Now, by the method of Lagrange multipliers, 
\begin{align*}
\max_{X\in \Pi(X_1, \dots, X_k; \nu)} \sum_{j=1}^m d_j h(B_j X) &= \max_{X\in \Pi(X_1, \dots, X_k)} ~\inf_{\lambda\geq 0} \left( \sum_{j=1}^m d_j h(B_j X) - \sum_{S: \nu(S)<\infty } \lambda(S) (I_S(X) - \nu(S))\right) \\
&= \inf_{\lambda\geq 0}~\max_{X\in \Pi(X_1, \dots, X_k)}  \underbrace{\left( \sum_{j=1}^m d_j h(B_j X) -\sum_{ S: \nu(S)<\infty }  \lambda(S) (I_S(X) - \nu(S))\right)}_{=:G(\lambda, X)} ,
\end{align*}
where the infimum is over functions $\lambda : 2^{\{1,\dots, k\}} \to [0,+\infty)$. The exchange of   $\max$ and $\inf$ follows by an application of the classical Sion minimax theorem.  Indeed, for any fixed $X\in \Pi(X_1, \dots, X_k )$, the function $\lambda \mapsto G(\lambda, X)$ is linear in $\lambda$.  On the other hand, $\Pi(X_1, \dots, X_k)$ is a convex  subset of $\mathcal{P}(E_0)$ that is compact with respect to the weak topology. For fixed $\lambda\geq 0$, the functional $X \mapsto G(\lambda, X)$ is concave upper semicontinuous on $\Pi(X_1, \dots, X_k)$ by  concavity of entropy and Lemma \ref{lem:WeakSemicontH}.     

Using the   definition of $I_S$, for any $\lambda\geq 0$, Theorem \ref{thm:GaussianComparisons} applies to give existence of Gaussian $(Z_i)_{i=1}^k$ satisfying 
\begin{align*}
&\max_{X\in \Pi(X_1, \dots, X_k)}   \left( \sum_{j=1}^m d_j h(B_j X) -  \sum_{ S: \nu(S)<\infty }  \lambda(S) (I_S(X) - \nu(S))\right) \\
&\geq  \max_{Z\in \Pi(Z_1, \dots, Z_k )} \left( \sum_{j=1}^m d_j h(B_j Z)
-  \sum_{ S: \nu(S)<\infty }  \lambda(S) (I_S(Z) - \nu(S))\right) \\
&\geq  \max_{Z\in \Pi(Z_1, \dots, Z_k;\nu )}  \sum_{j=1}^m d_j h(B_j Z).
\end{align*}
The last inequality follows since we are taking the maximum over a smaller set and because $\lambda\geq 0$.  This proves the theorem.

\section{Application: constrained multi-marginal inequalities} \label{sec:multimarginal}
In this section, we introduce a constrained version of the multi-marginal inequality considered in \eqref{eq:MainEntropyCouplingInequality} and demonstrate how the results transfer almost immediately with the help of Theorem \ref{thm:GaussianComparisonConstrained}.  

Fix a datum $(\mathbf{c},\mathbf{d},\mathbf{B})$.  For a constraint function $\nu: 2^{\{1,\dots, k\}}\to [0,+\infty]$, let $D(\mathbf{c},\mathbf{d},\mathbf{B};\nu)$ denote the smallest constant $D$ such that the inequality 
\begin{align}
\sum_{i=1}^k c_i h(X_i) \leq \max_{X\in \Pi(X_1, \dots, X_k;\nu)}\sum_{j=1}^m d_j h(B_j X) + D \label{eq:multimarginalConst}
\end{align}
holds for all choices of $X_i \in \mathcal{P}(E_i)$, $1\leq i\leq k$.  Call $(\mathbf{c},\mathbf{d},\mathbf{B};\nu)$  {\bf extremizable} if there are $X_i \in \mathcal{P}(E_i)$, $1\leq i\leq k$ which achieve equality in \eqref{eq:multimarginalConst} with $D = D(\mathbf{c},\mathbf{d},\mathbf{B};\nu)$.   Similarly, let $D_g(\mathbf{c},\mathbf{d},\mathbf{B};\nu)$ denote the smallest constant $D$ such that \eqref{eq:multimarginalConst} holds for all Gaussian $X_i \in \mathcal{G}(E_i)$, $1\leq i\leq k$, and call $(\mathbf{c},\mathbf{d},\mathbf{B};\nu)$  {\bf Gaussian-extremizable} if there are $X_i \in \mathcal{G}(E_i)$, $1\leq i\leq k$ which achieve equality in \eqref{eq:multimarginalConst} with $D = D_g(\mathbf{c},\mathbf{d},\mathbf{B};\nu)$.

The following generalizes Theorem \ref{thm:FRBLentropy} and \ref{thm:extImpliesGext} to the correlation-constrained setting.
\begin{theorem}\label{thm:constrainedFRBLentropy}
For any datum $(\mathbf{c},\mathbf{d},\mathbf{B})$ and constraint function $\nu$,  
\begin{enumerate}[(i)]
\item $D(\mathbf{c},\mathbf{d},\mathbf{B};\nu)=D_g(\mathbf{c},\mathbf{d},\mathbf{B};\nu)$; 
\item $(\mathbf{c},\mathbf{d},\mathbf{B};\nu)$ is extremizable if and only if it is Gaussian-extremizable; and
\item $D_g(\mathbf{c},\mathbf{d},\mathbf{B};\nu)$ is finite if and only if the scaling condition \eqref{eq:ScalingCond} and the dimension condition \eqref{eq:DimCond} hold.
\end{enumerate}
\end{theorem}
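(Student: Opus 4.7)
The plan is to derive all three parts as near-automatic consequences of Theorem \ref{thm:GaussianComparisonConstrained}, together with the unconstrained results Theorems \ref{thm:FRBLentropy} and \ref{thm:extImpliesGext}.

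For (i), the inequality $D_g(\mathbf{c},\mathbf{d},\mathbf{B};\nu)\leq D(\mathbf{c},\mathbf{d},\mathbf{B};\nu)$ is trivial since $\mathcal{G}(E_i)\subset \mathcal{P}(E_i)$. For the reverse, fix arbitrary $X_i\in \mathcal{P}(E_i)$ and invoke Theorem \ref{thm:GaussianComparisonConstrained} to obtain Gaussians $Z_i$ with $h(Z_i)=h(X_i)$ satisfying $\max_{X\in\Pi(X_i;\nu)}\sum_j d_j h(B_jX)\geq \max_{Z\in\Pi(Z_i;\nu)}\sum_j d_j h(B_jZ)$. Plugging this lower bound into the Gaussian instance of \eqref{eq:multimarginalConst} yields $\sum_i c_i h(X_i)-\max_{X\in\Pi(X_i;\nu)}\sum_j d_j h(B_jX)\leq D_g(\mathbf{c},\mathbf{d},\mathbf{B};\nu)$, and taking the supremum over $X_i$ gives $D(\mathbf{c},\mathbf{d},\mathbf{B};\nu)\leq D_g(\mathbf{c},\mathbf{d},\mathbf{B};\nu)$. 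For (ii), Gaussian-extremizability trivially implies extremizability; conversely, if $(X_i)$ is extremal with constant $D=D(\mathbf{c},\mathbf{d},\mathbf{B};\nu)$, the Gaussians $(Z_i)$ from Theorem \ref{thm:GaussianComparisonConstrained} satisfy
\[
\sum_i c_i h(Z_i)=\sum_i c_i h(X_i)=\max_{X\in \Pi(X_i;\nu)}\sum_j d_j h(B_jX)+D(\mathbf{c},\mathbf{d},\mathbf{B};\nu)\geq \max_{Z\in \Pi(Z_i;\nu)}\sum_j d_j h(B_jZ)+D_g(\mathbf{c},\mathbf{d},\mathbf{B};\nu),
\]
using (i) at the last step. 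The trivial reverse inequality from the definition of $D_g(\mathbf{c},\mathbf{d},\mathbf{B};\nu)$ forces equality throughout, so $(Z_i)$ is a Gaussian extremizer.

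For (iii), necessity follows from the set inclusion $\Pi(X_i;\nu)\subseteq \Pi(X_i)$, which gives $D(\mathbf{c},\mathbf{d},\mathbf{B};\nu)\geq D(\mathbf{c},\mathbf{d},\mathbf{B})$; Theorem \ref{thm:FRBLentropy} then forces the scaling and dimension conditions whenever $D(\mathbf{c},\mathbf{d},\mathbf{B};\nu)<\infty$. For sufficiency, observe that the independent Gaussian coupling $Z^{\mathrm{ind}}$ lies in $\Pi(Z_i;\nu)$ for any $\nu\geq 0$ (since $I_S(Z^{\mathrm{ind}})=0\leq \nu(S)$), so
\[
D_g(\mathbf{c},\mathbf{d},\mathbf{B};\nu)\leq \sup_{K_i\in \pd(E_i)}\left(\sum_i c_i h(Z_i)-\sum_j d_j h(B_jZ^{\mathrm{ind}})\right).
\]
Finiteness of the right-hand side under the scaling and dimension conditions is the classical Gaussian Brascamp--Lieb criterion, which can be verified directly here: concentrating $K_i=\epsilon P_{T_i^\perp}+P_{T_i}$ and letting $\epsilon\downarrow 0$ exposes the leading $\log\epsilon$-coefficient as $\sum_j d_j\dim(B_jT)-\sum_i c_i\dim(T_i)$ with $T=\oplus_i T_i$, whose nonnegativity is exactly \eqref{eq:DimCond}, and the scaling condition kills the overall-dilation degree of freedom. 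Combined with (i), this yields $D(\mathbf{c},\mathbf{d},\mathbf{B};\nu)<\infty$. The main obstacle is precisely this sufficiency argument: because $D(\mathbf{c},\mathbf{d},\mathbf{B};\nu)\geq D(\mathbf{c},\mathbf{d},\mathbf{B})$ rather than $\leq$, finiteness cannot be inherited from Theorem \ref{thm:FRBLentropy} and instead requires the independent-coupling upper bound together with a direct Gaussian computation.
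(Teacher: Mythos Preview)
Your arguments for (i), (ii), and the necessity half of (iii) are essentially identical to the paper's: apply Theorem~\ref{thm:GaussianComparisonConstrained} to replace $(X_i)$ by Gaussians $(Z_i)$ with matched entropies, and then read off the conclusions. No issues there.

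The gap is in sufficiency for (iii). You assert that ``finiteness cannot be inherited from Theorem~\ref{thm:FRBLentropy}'' because the inequality goes the wrong way, and you therefore fall back on a direct boundary analysis of the independent-coupling supremum. But finiteness \emph{can} be inherited, and this is exactly what the paper does: for any $K\in\Pi(K_1,\dots,K_k)$ one has the elementary operator inequality $K\leq k\,\operatorname{diag}(K_1,\dots,K_k)$ (Jensen on $|u^T Z|^2$), which yields
\[
\max_{Z\in\Pi(Z_1,\dots,Z_k)}\sum_j d_j h(B_j Z)\;\leq\;\sum_j d_j h(B_j Z^{\mathrm{ind}})+\log(k)\sum_j d_j\dim(E^j),
\]
and hence $D_g(\mathbf{c},\mathbf{d},\mathbf{B};\nu)\leq D_g(\mathbf{c},\mathbf{d},\mathbf{B};0)\leq D_g(\mathbf{c},\mathbf{d},\mathbf{B})+\text{const}$. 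Theorem~\ref{thm:FRBLentropy} then closes the loop.

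Your alternative route---bounding by the independent-coupling supremum and then arguing finiteness directly---is correct in principle, but the execution is only a sketch. Checking that the $\log\epsilon$-coefficient along paths $K_i=\epsilon P_{T_i^\perp}+P_{T_i}$ is nonnegative shows those particular degenerations do not blow up; it does not by itself prove the supremum over all $(K_i)$ is finite. A complete argument would require a compactness-plus-boundary analysis (normalize via the scaling condition, show continuity on the interior, and control \emph{every} boundary stratum), or else a citation to the Anantharam--Jog--Nair finiteness criterion---which in this paper is derived \emph{from} the theorem you are proving, so invoking it here would be circular. The paper's $K\leq k\,\operatorname{diag}(K_i)$ trick sidesteps all of this in one line.
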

\begin{proof}
For any $X_i\in \mathcal{P}(E_i)$ and any $\mathbf{c}$, an application of Theorem \ref{thm:GaussianComparisonConstrained} ensures existence of $Z_i \in \mathcal{G}(E_i)$ with $h(Z_i)=h(X_i)$ satisfying 
\begin{align*}
&\sum_{i=1}^k c_i h(X_i) - \max_{X\in \Pi(X_1, \dots, X_k;\nu)}\sum_{j=1}^m d_j h(B_j X)\\
&\leq \sum_{i=1}^k c_i h(Z_i) - \max_{Z\in \Pi(Z_1, \dots, Z_k;\nu)}\sum_{j=1}^m d_j h(B_j Z)  \leq D_g(\mathbf{c},\mathbf{d},\mathbf{B};\nu),
\end{align*}
where the final inequality follows by definition of $D_g$.  This establishes both (i) and (iii).  As for finiteness, observe that definitions imply
\begin{align}
D_g(\mathbf{c},\mathbf{d},\mathbf{B}) \equiv D_g(\mathbf{c},\mathbf{d},\mathbf{B}; +\infty) \leq D_g(\mathbf{c},\mathbf{d},\mathbf{B};\nu)\leq D_g(\mathbf{c},\mathbf{d},\mathbf{B};0)\label{eq:finitenessIneq}
\end{align}
for any $\nu$.  Now, for any $K\in \Pi(K_1, \dots, K_k)$ with $K_i\in \pd (E_i)$, $1\leq i \leq k$, observe that  
$$K\leq k \operatorname{diag}(K_1, \dots, K_k).$$  Indeed, for $Z\sim N(0,K)$ and $u=(u_1,\dots, u_k) \in E_0$, Jensen's inequality yields
$$
u^T K u = \EE|u^T Z|^2 \leq k \sum_{i=1}^k \EE |u_i^T Z_i |^2 =   k u^T \operatorname{diag}(K_1, \dots, K_k) u.
$$
This implies, for Gaussian $(Z_i)_{i=1}^k$, that
$$
\max_{Z \in \Pi(Z_1,\dots, Z_k)} \sum_{j=1}^m d_j h(B_j Z) \leq \sum_{j=1}^m d_j h(B_j Z^{\mathrm{ind}}) + \log(k) \sum_{j=1}^m d_j\dim(E^j),
$$
where $Z^{\mathrm{ind}}$ denotes  the independent coupling of the $Z_i$'s. Thus,
$$
D_g(\mathbf{c},\mathbf{d},\mathbf{B};0) \leq D_g(\mathbf{c},\mathbf{d},\mathbf{B})+ \log(k) \sum_{j=1}^m d_j\dim(E^j),
$$
so that finiteness of $D_g(\mathbf{c},\mathbf{d},\mathbf{B};\nu)$ is equivalent to finiteness of $D_g(\mathbf{c},\mathbf{d},\mathbf{B})$ by \eqref{eq:finitenessIneq}. Invoking Theorem \ref{thm:FRBLentropy} completes the proof.
\end{proof}

When $\nu \equiv 0$, then the only allowable coupling in \eqref{eq:multimarginalConst} is the independent one.  Thus, we recover the main results of Anantharam, Jog and Nair \cite[Theorems 3 \& 4]{anantharam2019unifying}, which simultaneously capture the entropic Brascamp--Lieb inequalities and the EPI.

When $\nu \equiv +\infty$, then we immediately recover Theorems \ref{thm:FRBLentropy} and \ref{thm:extImpliesGext}.  Of note, we recall from \cite{liu2018forward, CourtadeLiu21} that, by extending the duality for the Brascamp--Lieb inequalities \cite{carlen2009subadditivity}, Theorem \ref{thm:FRBLentropy}  has the following equivalent functional form.
\begin{theorem}\label{thm:FRBLfunctional}
Fix a datum $(\mathbf{c},\mathbf{d},\mathbf{B})$.  If measurable functions $f_i : E_i \to \R^+$,  $1\leq i \leq k$ and $g_j : E^j \to \R^+$,  $1\leq j\leq m$  satisfy 
\begin{align}
\prod_{i=1}^k f_i^{c_i}(\pi_{E_i}(x)) \leq \prod_{j=1}^m g_j^{d_j}\left( B_j x   \right)\hspace{1cm}\forall x\in E_0,\label{eq:majorization}
\end{align}
then 
\begin{align}
\prod_{i=1}^k \left( \int_{E_i} f_i  \right)^{c_i} \leq e^{ D_g(\mathbf{c},\mathbf{d},\mathbf{B}) } \prod_{j=1}^m \left( \int_{E^j} g_j  \right)^{d_j}.\label{eq:frblFunctional}
\end{align}
 Moreover, the constant $D_g(\mathbf{c},\mathbf{d},\mathbf{B})$  is best possible. 
\end{theorem}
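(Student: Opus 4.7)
The plan is to deduce the functional inequality \eqref{eq:frblFunctional} from the entropic inequality \eqref{eq:MainEntropyCouplingInequality} via the Gibbs variational principle, and to establish sharpness of the constant using the Gaussian-extremizability already in hand from Theorem \ref{thm:extImpliesGext}.

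For the upper bound, I would first reduce to the case $0 < \int f_i < \infty$ for each $i$ (the inequality is trivial or vacuous otherwise) and let $X_i$ be distributed according to the normalized density $\tilde f_i := f_i / \int f_i$. Taking logarithms of the hypothesis \eqref{eq:majorization} and integrating against the joint density of any coupling $X \in \Pi(X_1, \dots, X_k)$ yields
$$
\sum_{i=1}^k c_i \log \int_{E_i} f_i \; - \; \sum_{i=1}^k c_i h(X_i) \;\le\; \sum_{j=1}^m d_j \, \EE \log g_j(B_j X).
$$
The Gibbs variational principle (i.e., nonnegativity of relative entropy of the law of $B_j X$ against the probability measure proportional to $g_j$) gives $\EE \log g_j(B_j X) \le \log \int g_j - h(B_j X)$ for each $j$. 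Combining and rearranging,
$$
\sum_{i=1}^k c_i \log \int f_i \; - \; \sum_{j=1}^m d_j \log \int g_j \;\le\; \sum_{i=1}^k c_i h(X_i) \; - \; \sum_{j=1}^m d_j h(B_j X).
$$
Choosing $X$ to maximize $\sum_j d_j h(B_j X)$ (Proposition \ref{prop:MaxEntropyCouplingExists}) and applying Theorem \ref{thm:FRBLentropy} bounds the right-hand side by $D_g(\mathbf{c},\mathbf{d},\mathbf{B})$, and exponentiating delivers \eqref{eq:frblFunctional}.

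For sharpness, the strategy is to produce $f_i, g_j$ for which both the pointwise inequality \eqref{eq:majorization} and the Gibbs step above are (approximately) tight. When the datum is Gaussian-extremizable, take $f_i$ equal to the Gaussian density of an extremal $Z_i \sim N(0,K_i)$ from Theorem \ref{thm:extImpliesGext}, let $Z^*$ be a Gaussian coupling achieving $\max_{Z \in \Pi(Z_1,\dots,Z_k)} \sum_j d_j h(B_j Z)$ (Proposition \ref{prop:MaxEntropyCouplingExists}), and set each $g_j$ proportional to the Gaussian density of $B_j Z^*$. Then the Gibbs step is tight by construction, and \eqref{eq:majorization} reduces (after accounting for $\log(2\pi e)$ factors via the scaling condition \eqref{eq:ScalingCond}) to a pointwise comparison of Gaussian quadratic forms whose normalizing constant can be calibrated to equalize both sides. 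In the non-Gaussian-extremizable case, one instead uses a sequence of Gaussian $K_i$'s whose covariance data nearly attain the infimum in \eqref{eq:DgFromF}, and passes to the limit using the continuity supplied by Corollary \ref{cor:ContinuityOfMaxDet} and Theorem \ref{thm:FunctionalPropertiesDg}.

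The main obstacle is the sharpness direction: the upper bound amounts to a clean two-line chain once the Gibbs variational principle is introduced, but matching it requires simultaneously saturating a pointwise inequality (which controls only one direction) and an integrated one (the Gibbs step). Aligning both at a single choice of $(f_i, g_j)$ is straightforward in the Gaussian-extremizable regime, where both reduce to explicit quadratic identities, but demands a separate Gaussian approximation argument when extremizers exist only in a limiting sense — precisely the situation the dimension and scaling conditions of Theorem \ref{thm:FRBLentropy} are designed to control.
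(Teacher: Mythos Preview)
The paper does not supply its own proof of this theorem; it is quoted from \cite{liu2018forward, CourtadeLiu21} as the functional equivalent of Theorem~\ref{thm:FRBLentropy}, with the equivalence credited to the entropic duality of \cite{carlen2009subadditivity}. Your upper-bound argument is precisely that duality and is correct (modulo routine approximation to ensure the normalized $f_i$ lie in $\mathcal{P}(E_i)$).

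Your sharpness argument, however, has a real gap. With $f_i$ the Gaussian density of $Z_i\sim N(0,K_i)$ and $g_j=\alpha_j\phi_j$ proportional to the density $\phi_j$ of $B_jZ^*$ for an optimal Gaussian coupling $Z^*\sim N(0,K^*)$, taking logarithms in \eqref{eq:majorization} splits the hypothesis into a scalar inequality between normalizing constants and the matrix inequality
\[
\operatorname{diag}(c_1K_1^{-1},\dots,c_kK_k^{-1})\;\ge\;\sum_{j=1}^m d_j\,B_j^\top (B_jK^*B_j^\top)^{-1}B_j.
\]
Only the scalar part can be ``calibrated'' by the $\alpha_j$'s; the matrix inequality is independent of any normalization and must hold on its own. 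It \emph{does} hold at Gaussian extremizers, but this is a nontrivial consequence of the first-order (KKT) conditions for the determinant optimization defining $D_g$ --- proved in \cite{CourtadeLiu21} --- and you need to invoke that explicitly rather than absorb it into a calibration step. The same matrix inequality is exactly the feasibility constraint \eqref{eq:MinMaxOperatorHypothesisIntro} in Theorem~\ref{thm:FRdualQuadraticForms} with $U_i=c_iK_i^{-1}$ and $V_j=(B_jK^*B_j^\top)^{-1}$, so an alternative (and cleaner) route to sharpness is to test the functional inequality on centered Gaussians $f_i=e^{-\tfrac12 x_i^\top K_i^{-1}x_i}$, $g_j=e^{-\tfrac12 y^\top V_j y}$ with $(U_i,V_j)$ feasible for \eqref{eq:MinMaxOperatorHypothesisIntro}, and then appeal directly to the Fenchel duality of Theorem~\ref{thm:FRdualQuadraticForms} to recover $D_g$. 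In the non-extremizable regime your limiting sketch inherits the same missing ingredient; the continuity tools you cite do not by themselves produce the needed operator inequality for near-optimizers.
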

%

%\begin{remark}
%Theorem \ref{thm:FRBLfunctional}  follows easily from Theorem  \ref{thm:FRBLentropy}; it is the reverse implication that is more difficult.  Indeed, by homogeneity, it suffices to assume  each $f_i$ is a  probability density on $E_i$.  Now, let $X_i \sim f_i$, and observe that  for any $X\in \Pi(X_1,\dots, X_k)$, convexity,  the majorization hypothesis \eqref{eq:majorization}, and Theorem \ref{thm:FRBLfunctional} formally  imply
%\begin{align*}
%\sum_{j=1}^m d_j  h\left( B_j X\right) &\leq \sum_{j=1}^m d_j  \left( \log \left(\int_{E^j} g_j \right) - \EE \log g_j\left( B_j X\right)\right) \\
%&\leq \sum_{j=1}^m d_j    \log \left(\int_{E^j} g_j \right) -  \sum_{j=1}^m c_i \EE \log f_i\left( X_i\right), \\
%&= \sum_{j=1}^m d_j    \log \left(\int_{E^j} g_j \right) +  \sum_{j=1}^m c_i h(X_i)\\
%&\leq \sum_{j=1}^m d_j    \log \left(\int_{E^j} g_j \right) + \max_{X\in \Pi(X_1, \dots, X_k)}\sum_{j=1}^m d_j h(B_j X) + D_g(\mathbf{c},\mathbf{d},\mathbf{B}).  
%\end{align*}
%Maximizing the lower bound over $X\in \Pi(X_1, \dots, X_k)$ and using $\int_{E_i}f_i = 1$, $1\leq i \leq k$, gives \eqref{eq:frblFunctional}.
%\end{remark}

By a suitable choice of datum $(\mathbf{c},\mathbf{d},\mathbf{B})$, this implies many geometric inequalities such as the Brascamp--Lieb inequalities \cite{brascamp1974general, brascamp1976best, lieb1990gaussian} (which include, e.g.,  H\"older's inequality, the sharp Young  inequality, the Loomis--Whitney inequalities), the Barthe inequalities \cite{barthe1998reverse} (which include, e.g.,  the Pr\'ekopa--Leindler inequality, Ball's inequality \cite{ball1989volumes}), the sharp reverse Young inequality \cite{brascamp1976best},  the Chen--Dafnis--Paouris inequalities \cite{chen2015improved}, and a form of the Barthe--Wolff inequalities \cite{barthe2018positive}.  Readers are referred to \cite{CourtadeLiu21} for a more detailed account of these implications and further references.  The survey by Gardner also gives a clear depiction of the hierarchy implied by  the Brascamp--Lieb and Barthe inequalities \cite[Fig. 1]{gardner2002brunn}.

We remark that, while Theorem \ref{thm:FRBLentropy} admits the equivalent functional form given above, there is no obvious functional equivalent when $\nu$ induces nontrivial correlation constraints.  In particular, the comparison \eqref{eq:maxEntComparisonConstrained} seems to be most naturally expressed in the language of entropies (even in the unconstrained case).

\section{Application: Gaussian saddle point}\label{sec:saddle}

The EPI has been successfully applied many times to prove coding theorems, particularly in the field of network information theory.  However, it also provides the essential ingredient in establishing that a certain mutual information game admits a saddle point (see  \cite{pinsker1956calculation, Ihara}, and also \cite[Problem 9.21]{coverThomas}).  Namely, for numbers $P,N\geq 0$, we have 
\begin{align}
\sup_{P_X: \EE|X|^2\leq P} ~\inf_{P_Z: \EE|Z|^2\leq N} I(X;X+Z) =  \inf_{P_Z: \EE|Z|^2\leq N} ~\sup_{P_X: \EE|X|^2\leq P} I(X;X+Z) , \notag %
\end{align}
where the $\sup$ (resp.\ $\inf$) is over  $X\sim P_X\in \mathcal{P}(\mathbb{R}^n)$ such that $\EE|X|^2\leq P$  (resp.\ $Z\sim P_Z\in \mathcal{P}(\mathbb{R}^n)$ such that $\EE|Z|^2\leq N$), and the mutual information is computed under the assumption that $X\sim P_X$ and $Z\sim P_X$ are independent.  It turns out that the game admits a Gaussian saddle point, which together with Shannon's capacity theorem,   implies that worst-case additive noise is Gaussian.

In this section, we extend this saddle point property to a game with payoff  given by
$$
G_{\zeta}(P_X, P_Z) := \sup_{ \substack{ (X,Z)\in\Pi(P_X,P_Z):\\ I(X;Z)\leq \zeta}} I(X; X+Z),
$$
for a parameter $\zeta\geq 0$, where the supremum is over couplings $(X,Z)$ with given marginals $X\sim P_X$ and $Z\sim P_Z$. 
Of course, by taking $\zeta = 0$, we will recover the classical saddle-point result above.   This may be interpreted as a game where the signal and noise players fix their strategies $P_X$ and $P_Z$, but the signal player has the benefit during game-play of adapting their transmission using side information obtained about the noise player's action.
\begin{theorem}\label{thm:SaddlePt} For $0< P,N < \infty$ and $\zeta\geq 0$,
\begin{align*}
&\sup_{P_{X}: \EE|X|^2\leq P} ~\inf_{P_{Z}: \EE|Z|^2\leq N} G_{\zeta}(P_X, P_Z) =  \inf_{P_{Z}: \EE|Z|^2\leq N} ~\sup_{P_{X}: \EE|X|^2\leq P} G_{\zeta}(P_X, P_Z) .
\end{align*}
Moreover, $P_X =  N\left(0,\tfrac{P}{n}\id_{\mathbb{R}^n}\right)$ and $P_Z =  N\left(0,\tfrac{N}{n}\id_{\mathbb{R}^n}\right)$ is a saddle point.
\end{theorem}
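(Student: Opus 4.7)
The plan is to identify the isotropic Gaussian pair $P_X^* = N(0,(P/n)I_{\R^n})$, $P_Z^* = N(0,(N/n)I_{\R^n})$ as a saddle point of the game, since the min-max equality is then immediate from a standard saddle-point argument. It therefore suffices to establish the two-sided inequality
\[
  G_\zeta(P_X,P_Z^*) \;\leq\; G_\zeta(P_X^*,P_Z^*) \;\leq\; G_\zeta(P_X^*,P_Z)
\]
for all admissible $P_X$, $P_Z$. The value at the conjectured saddle can be computed in closed form using the jointly Gaussian coupling with correlation $\rho=\sqrt{1-e^{-2\zeta/n}}$, which saturates $I(X^*;Z^*)=\zeta$ (this is precisely the coupling appearing in the proof of Corollary~\ref{thm:depEPI}): combined with the identity $I(X^*;X^*+Z^*)=h(X^*+Z^*)-h(Z^*)+I(X^*;Z^*)$, this gives
\[
  G_\zeta(P_X^*,P_Z^*) \;=\; \tfrac{n}{2}\log\!\left(1+\tfrac{P}{N}+2\sqrt{(1-e^{-2\zeta/n})\tfrac{P}{N}}\right)+\zeta.
\]

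For the upper (signal-side) inequality, fix $Z^*$ Gaussian and take arbitrary $P_X$ with $\EE|X|^2\leq P$. Using $I(X;X+Z^*)=h(X+Z^*)-h(Z^*)+I(X;Z^*)\leq h(X+Z^*)-h(Z^*)+\zeta$ together with the max-entropy bound $h(X+Z^*)\leq\tfrac{n}{2}\log(2\pi e\,\EE|X+Z^*|^2/n)$, the problem reduces to controlling the cross term $\EE\langle X,Z^*\rangle$. Here the key small lemma I would use is the following: Gaussianity of $Z^*$ and $I(X;Z^*)\leq\zeta$ force $\EE|\EE[Z^*\mid X]|^2\leq N(1-e^{-2\zeta/n})$. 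This follows from the conditional max-entropy bound $h(Z^*\mid X)\leq\tfrac{n}{2}\log(2\pi e(N-\EE|\EE[Z^*\mid X]|^2)/n)$ (via Jensen applied to the log and the law of total variance) combined with $\zeta\geq I(X;Z^*)=h(Z^*)-h(Z^*\mid X)$. Cauchy--Schwarz then yields $|\EE\langle X,Z^*\rangle|=|\EE\langle X,\EE[Z^*\mid X]\rangle|\leq\sqrt{PN(1-e^{-2\zeta/n})}$, so $\EE|X+Z^*|^2\leq P+N+2\sqrt{PN(1-e^{-2\zeta/n})}$ and the upper inequality drops out by direct substitution.

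For the lower (noise-side) inequality, fix $X^*\sim N(0,(P/n)I)$ and consider arbitrary $P_Z$ with $\EE|Z|^2\leq N$. Applying Corollary~\ref{thm:depEPI} to $(X^*,Z)$ produces a coupling with $I(X^*;Z)\leq\zeta$ achieving $N(X^*+Z)\geq N(X^*)+N(Z)+2\sqrt{(1-e^{-2\zeta/n})N(X^*)N(Z)}$. Plugging into the identity $I(X^*;X^*+Z)=h(X^*+Z)-h(Z)+I(X^*;Z)$, using $h(Z)=\tfrac{n}{2}\log N(Z)$, and invoking $N(Z)\leq 2\pi e N/n$ (max entropy under $\EE|Z|^2\leq N$) together with monotonicity of $t\mapsto 1+t+2\sqrt{(1-e^{-2\zeta/n})t}$ recovers the target value---\emph{provided} the additive $I(X^*;Z)$ contributes the full $\zeta$. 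This is the main obstacle of the proof: one must ensure that the extremal coupling given by Corollary~\ref{thm:depEPI} can be taken to saturate $I(X^*;Z)=\zeta$ rather than be strictly less. I would handle this by tracing through the proof of Corollary~\ref{thm:depEPI}, where the Gaussian side of the comparison is constructed by a coupling with $I=\zeta$ exactly, and transferring this tight saturation back through Theorem~\ref{thm:GaussianComparisonConstrained}; an alternative route is a continuity-in-$\zeta$ plus monotonicity argument for $\zeta'\mapsto\sup_{I\leq\zeta'}h(X^*+Z)$, which forces the correlation constraint to bind at the maximizer.
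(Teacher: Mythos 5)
Your overall strategy mirrors the paper's: establish the two saddle-point inequalities $G_\zeta(P_X,P_Z^*)\leq G_\zeta(P_X^*,P_Z^*)\leq G_\zeta(P_X^*,P_Z)$ and conclude by the standard min-max argument. Your signal-side (upper) inequality is correct and is a nice self-contained variant of the paper's route: you bound $\EE|\EE[Z^*\mid X]|^2\leq N(1-e^{-2\zeta/n})$ via conditional max-entropy and the law of total variance, then Cauchy--Schwarz controls $\EE\langle X,Z^*\rangle$ and a moment-based entropy bound closes the argument. The paper instead couples $(X^*,Z^*)$ jointly Gaussian with the same covariance as $(X,Z^*)$ and invokes Gaussian max-entropy for both $h(X+Z^*)\leq h(X^*+Z^*)$ and $I(X^*;Z^*)\leq I(X;Z^*)$; both approaches yield the same conclusion.

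The noise-side (lower) inequality, however, has the genuine gap that you yourself flag, and neither of your proposed resolutions closes it. Corollary~\ref{thm:depEPI} hands you a coupling with $I(X^*;Z)\leq\zeta$ and large $h(X^*+Z)$; to recover the target value via the identity $I(X^*;X^*+Z)=h(X^*+Z)-h(Z)+I(X^*;Z)$ you need the additive $I(X^*;Z)$ to contribute the full $\zeta$. Resolution (a) fails because the $I=\zeta$ coupling in the proof of Corollary~\ref{thm:depEPI} lives on the \emph{Gaussian comparison side} $(Z_1,Z_2)$, not on the side $(X^*,Z)$ whose $\sup$ you must lower-bound; Theorem~\ref{thm:GaussianComparisonConstrained} gives no handle on the argmax coupling for the non-Gaussian marginals, only on the value. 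Resolution (b) presumes the constraint binds at the maximizer of $h(X^*+Z)$ over $I\leq\zeta$, but there is no a priori reason for this: the unconstrained maximizer of $h(X^*+Z)$ may well satisfy $I(X^*;Z)<\zeta$, and increasing $I$ from there could strictly decrease $h$. What is actually needed is the statement that for every coupling with $I(X^*;Z)\leq\zeta$ and every $\epsilon>0$ there is a coupling with $I(X^*;Z)=\zeta$ whose entropy $h(X^*+Z)$ is within $\epsilon$ of the original. The paper proves exactly this as Proposition~\ref{prop:rearrangementArgument} via a local rearrangement on a small Gaussian slab $\{X\in(-\epsilon,\epsilon]\}$, and explicitly remarks in a footnote that it knows of no simple argument for this ``obvious-sounding'' fact. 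Without an analogue of that proposition, the chain
\begin{align*}
\sup_{\Pi(X^*,Z;\zeta)} I(X^*;X^*+Z)\;=\;\sup_{\Pi(X^*,Z;\zeta)}\bigl(h(X^*+Z)-h(Z)\bigr)+\zeta
\end{align*}
remains unjustified, and your proof is incomplete at this step.
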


\begin{proof}[Proof of Theorem \ref{thm:SaddlePt}]  In a slight abuse of notation, we will write $ \Pi(X_1, X_2; \zeta)$  to denote couplings of $X_1,X_2$ satisfying $I(X_1;X_2)\leq \zeta$. 

Let $X$ and $Z$ be a random variables with finite variance, and let $X^*,Z^*$ be centered isotropic Gaussians with $\EE|X^*|^2 = \EE|X|^2$ and $\EE|Z^*|^2 = \EE|Z|^2$.   Now, observe that Theorem \ref{thm:depEPI} implies
\begin{align*}
\max_{  \Pi(X^*, Z; \zeta) } \left( h(X^*+ Z) - h(Z) \right) &\geq  \frac{n}{2}\log\left( 1 + \frac{N(X^*)}{N(Z)}   + 2 \sqrt{ (1 - e^{- \frac{2 \zeta}{n} }) \frac{N(X^*)}{N(Z)}  }\right)\\
&\geq  \frac{n}{2}\log\left( 1 + \frac{N(X^*)}{N(Z^*)}   + 2 \sqrt{ (1 - e^{- \frac{2 \zeta}{n}}) \frac{N(X^*)}{N(Z^*)}  }\right)\\
&=\max_{  \Pi(X^*, Z^*; \zeta) } \left( h(X^*+ Z^*) - h(Z^*) \right), 
\end{align*}
where the second inequality follows since $h(Z) \leq h(Z^*)$, and the last equality follows by the equality conditions in Theorem \ref{thm:depEPI}.  In particular, this gives
\begin{align}
\sup_{  \Pi(X^*, Z; \zeta) } I(X^*; X^*+ Z) &= \sup_{  \Pi(X^*, Z; \zeta) } \left( h(X^*+ Z) - h(Z) + I(X^*; Z)\right) \notag\\
&= \sup_{  \Pi(X^*, Z; \zeta) } \left( h(X^*+ Z) - h(Z) \right)+ \zeta \label{secondEquality} \\
&\geq \sup_{  \Pi(X^*, Z^*; \zeta) } \left( h(X^*+ Z^*) - h(Z^*) \right)+ \zeta \label{applyCor}\\
&=\sup_{  \Pi(X^*, Z^*; \zeta) } I(X^*; X^*+ Z^*), \notag
\end{align}
where  \eqref{secondEquality} can be justified using the supremum\footnote{This sounds obvious, but we don't know of a simple argument to justify the assertion.  A proof  is given in Proposition \ref{prop:rearrangementArgument}.}, and \eqref{applyCor} follows from the previous computation. 
For any pair $(X,Z^*)$, couple $(X^*, Z^*)$ to have the same covariance.  By the max-entropy property of Gaussians, $I(X^*; Z^*)\leq I(X;Z^*)$ and $h(X+ Z^*) \leq h(X^*+ Z^*)$.  As a result, we have
\begin{align*}
\sup_{  \Pi(X, Z^*; \zeta) }\!\! I(X; X+ Z^*) \leq \sup_{  \Pi(X^*, Z^*; \zeta) } \!\!\! I(X^*; X^*+ Z^*) \leq \sup_{  \Pi(X^*, Z; \zeta) } \!\!I(X^*; X^*+ Z) .
\end{align*}
This implies 
\begin{align*}
\inf_{P_{Z}: \EE|Z|^2\leq N} ~\sup_{P_{X}: \EE|X|^2\leq P} G_{\zeta}(P_X, P_Z)  \leq &\sup_{P_{X}: \EE|X|^2\leq P} ~\inf_{P_{Z}: \EE|Z|^2\leq N} G_{\zeta}(P_X, P_Z), 
\end{align*}
and the reverse direction follows by  the max-min inequality.  The fact that the asserted distributions coincide with the saddle point subject to the constraints follows by direct computation.
\end{proof}

We now tie up loose ends by justifying \eqref{secondEquality}, which is an easy consequence of the proposition below. 
\begin{proposition}\label{prop:rearrangementArgument}
Let $X\sim N(0,\id_{\mathbb{R}^n})$ and   $Z \in \mathcal{P}(\mathbb{R}^n)$  be jointly distributed with $I(X;Z) \leq \zeta < + \infty$.  For any $\epsilon>0$, there is a coupling $(X',Z') \in \Pi(X,Z)$ with $h(X'+Z') \geq h(X+Z)-\epsilon$ and $I(X';Z')=\zeta$.
\end{proposition}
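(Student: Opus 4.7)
My plan is to produce $(X',Z')$ as a convex mixture of the given law $\mu_0 := \law(X,Z)$ with an auxiliary coupling $\mu_1 \in \Pi(X,Z)$ whose mutual information can be made arbitrarily large; continuity of $I$ along the mixture lets the intermediate value theorem deliver a mixing parameter at which $I = \zeta$, while concavity of entropy keeps $h(X'+Z')$ close to $h(X+Z)$ as long as the mixing parameter is small.

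To build the auxiliary, use that $Z \in \mathcal{P}(\R^n)$ has a density and finite second moment, so Brenier's theorem provides a convex $\phi$ with $T := \nabla\phi$ satisfying $T(Z) \sim N(0,\id)$.  Let $W \sim N(0,\id)$ be independent of $(X,Z)$ and, for $\alpha \in [0,1)$, set $X_\alpha := \alpha T(Z) + \sqrt{1-\alpha^2}\, W$.  Because $T(Z)$ and $W$ are independent standard Gaussians, $X_\alpha \sim N(0,\id)$, whence $\mu_1^{(\alpha)} := \law(X_\alpha,Z)$ lies in $\Pi(X,Z)$; the Gaussian-channel calculation, using that $X_\alpha \mid Z \sim N(\alpha T(Z), (1-\alpha^2)\id)$, gives $I(X_\alpha;Z) = -\tfrac{n}{2}\log(1-\alpha^2) \to \infty$ as $\alpha \to 1$.

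Now write $\mu_\lambda := (1-\lambda)\mu_0 + \lambda \mu_1^{(\alpha)}$ and use the mixture-entropy sandwich
\[
(1-\lambda) h(\mu_0) + \lambda h(\mu_1^{(\alpha)}) \;\leq\; h(\mu_\lambda) \;\leq\; (1-\lambda) h(\mu_0) + \lambda h(\mu_1^{(\alpha)}) + h_2(\lambda),
\]
with $h_2$ the binary entropy function, to see that $\lambda \mapsto I(\mu_\lambda) = h(X) + h(Z) - h(\mu_\lambda)$ is continuous on $[0,1]$.  Once $\alpha$ is chosen so that $I(\mu_1^{(\alpha)}) > \zeta$, the intermediate value theorem supplies $\lambda^* \in (0,1]$ with $I(\mu_{\lambda^*}) = \zeta$, and the same sandwich gives the quantitative bound $\lambda^* \leq (\zeta - I(X;Z) + \log 2)/(I(X_\alpha;Z) - I(X;Z))$, which tends to $0$ as $\alpha \to 1$.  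Concavity of entropy applied to $\law(X'+Z')_{\mu_\lambda} = (1-\lambda)\law(X+Z) + \lambda\law(X_\alpha+Z)$ then yields
\[
h(X'+Z')_{\mu_{\lambda^*}} \;\geq\; h(X+Z) - \lambda^*\,\bigl(h(X+Z) - h(X_\alpha+Z)\bigr),
\]
so everything reduces to a uniform-in-$\alpha$ bound on $h(X_\alpha + Z)$.

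The uniform bound on $h(X_\alpha + Z)$ is the principal technical obstacle, since the naive lower estimate $h(X_\alpha+Z) \geq h(\sqrt{1-\alpha^2}\,W) = \tfrac{n}{2}\log(2\pi e(1-\alpha^2))$ diverges to $-\infty$ at exactly the rate $\lambda^*$ vanishes.  The fix is to condition on $W$ instead: since $W$ is independent of $(X,Z)$ and the conditional law of $X_\alpha + Z$ given $W$ is a deterministic shift, $h(X_\alpha+Z) \geq h(X_\alpha + Z \mid W) = h(\alpha T(Z) + Z)$.  The Brenier structure now enters: $\alpha T + \id = \nabla(\alpha\phi + \tfrac{1}{2}|\cdot|^2)$ is the gradient of a strictly convex function, hence injective, so by the change-of-variables formula for entropy together with $\nabla^2 \phi \succeq 0$,
\[
h(\alpha T(Z)+Z) \;=\; h(Z) + \EE\log\det\bigl(\id + \alpha\,\nabla^2\phi(Z)\bigr) \;\geq\; h(Z).
\]
Combined with the maximum-entropy upper bound $h(X_\alpha+Z) \leq \tfrac{1}{2}\log((2\pi e)^n \det\Cov(X_\alpha+Z))$, which is uniformly bounded in $\alpha$ because $\Cov(X_\alpha+Z)$ is, this traps $h(X_\alpha+Z)$ in a bounded interval independent of $\alpha$.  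Choosing $\alpha$ close enough to $1$ then drives $\lambda^*\bigl(h(X+Z) - h(X_\alpha+Z)\bigr) < \epsilon$, and $(X',Z') \sim \mu_{\lambda^*}$ is the promised coupling.
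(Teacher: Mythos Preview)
Your argument is correct and takes a genuinely different route from the paper's.  Both proofs share the same skeleton: build an auxiliary coupling in $\Pi(X,Z)$ whose mutual information can be pushed arbitrarily high while its sum-entropy stays bounded below, then interpolate by convex mixture and apply the intermediate value theorem.  The difference lies entirely in how the auxiliary is manufactured.  The paper (working in dimension one) rearranges the joint law on a thin strip $\{X\in(-\epsilon,\epsilon]\}$: it slices both coordinates into $n$ equiprobable cells and re-couples them along the diagonal, so that $I(X;Z_n)=\Theta(\epsilon n)$, and controls $h(X+Z_n)$ by observing that on the staircase the cell index is a function of $X+Z_n$.  This is elementary but somewhat intricate, and the passage to general $n$ is left implicit.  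Your construction instead routes through optimal transport: with $T=\nabla\phi$ the Brenier map from $\law(Z)$ to $N(0,\id)$, the Gaussian channel $X_\alpha=\alpha T(Z)+\sqrt{1-\alpha^2}\,W$ gives $I(X_\alpha;Z)=-\tfrac{n}{2}\log(1-\alpha^2)$ for free, and the uniform lower bound $h(X_\alpha+Z)\ge h(Z)$ falls out of the monotone (strongly convex gradient) structure of $z\mapsto z+\alpha T(z)$ together with McCann's change-of-variables formula.  Your approach is cleaner, works in any dimension without modification, and makes the two competing rates ($\lambda^*\to 0$ versus the entropy defect) completely explicit; the price is that it invokes Brenier's theorem and the Aleksandrov/McCann Jacobian identity, which the paper's hands-on rearrangement avoids.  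One small remark: the change-of-variables step deserves a citation (e.g., McCann's displacement-convexity paper or Villani's \emph{Topics}, Theorem~4.8), since $\phi$ need not be $C^2$ and the Aleksandrov second derivative is what makes the identity valid.
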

\begin{proof}  We'll work in dimension $n=1$ for simplicity of exposition.  It suffices to establish existence of $(X',Z') \in \Pi(X,Z)$ with $h(X'+Z') \geq h(X+Z)-\epsilon$ and $\zeta \leq I(X';Z') < +\infty$.  Indeed, if there is such $(X',Z')$, then we can  let $\pi_0$ denote the joint distribution of $(X,Z)$ and $\pi_1$ denote the joint distribution of $(X',Z')$.  For $\theta\in[0,1]$ define the mixture 
$$
\pi_{\theta}= (1-\theta) \pi_0 + \theta \pi_1. 
$$
Evidently, $\pi_{\theta}\in \Pi(X,Z)$ for all $\theta\in[0,1]$.  For $(X^{(\theta)}, Z^{(\theta)})\sim \pi_{\theta}$, concavity of entropy gives
\begin{align*}
h(X^{(\theta)}+ Z^{(\theta)}) &\geq (1-\theta)h(X+Z)+ \theta h(X'+Z')  \geq h(X+Z) - \epsilon.
\end{align*}
Now, convexity of relative entropy ensures that $\theta \mapsto I(X^{(\theta)}; Z^{(\theta)})$ is continuous on $(0,1)$.  Weak lower semicontinuity of mutual information together with finiteness of $I(X'; Z')$ establishes continuity at the endpoints, so that the above mapping is continuous on $[0,1]$.  As a result, the intermediate value theorem ensures there is some $\theta \in [0,1]$ such that $h(X^{(\theta)}+ Z^{(\theta)}) \geq  h(X+Z) - \epsilon$ 
and $I(X^{(\theta)}; Z^{(\theta)})=\zeta$. 

Toward establishing the above ansatz, fix $\epsilon>0$, and consider the interval $I:=(-\epsilon,\epsilon]$.  Define $p(\epsilon):= \Pr\{X\in I\}$, and note that $p(\epsilon) = \Theta(\epsilon)$ since $X$ is assumed Gaussian. For fixed parameters $n\geq1$ and $\epsilon$, we'll rearrange the joint distribution of $(X, Z)$ on the event $\{X\in I\}$.  To this end, consider two partitions
$$
-\epsilon = t_0 < t_1 < \cdots < t_n = \epsilon
$$
and
$$
-\infty = s_0 < s_1 < \cdots < s_n = +\infty
$$
such that 
$$
\Pr\{X \in (t_{i-1}, t_i]|X\in I\} = \Pr\{Z \in (s_{i-1}, s_i]|X\in I\} = \frac{1}{n}, ~~1\leq i  \leq n.
$$
This is always possible since  $X$ and $Z$ are (marginally) continuous random variables.  We now define a random variable $Z_n$, jointly distributed with $X$, by rearranging the distribution of $(X,Z)$ as follows.  On the event $\{X\notin I\}$, we let $Z_n = Z$.  Conditioned on the event $\{X\in I\}$, we let the joint density of $(X,Z_n)$ be supported on the union of rectangles $R:=\cup_{i=1}^n (t_{i-1}, t_{i}]\times (s_{i-1}, s_{i}]$, given explicitly by
$$
f_{X,Z_n|X\in I}(x,z) = n f_{X|X\in I}(x) f_{Z|X\in I}(z) 1_{\{(x,z) \in R\}}.
$$
This is well-defined since the conditional densities $f_{X|X\in I}, f_{Z|X\in I}$ exist by marginal continuity of $X$ and $Z$, and the fact that $\Pr\{X\in I\}>0$.

Observe that this rearrangement preserves marginals,  so $(X,Z_n)\in \Pi(X,Z)$.   Further, note that $I(X; Z_n | X\in I) = n$ by construction, therefore 
\begin{align*}
I(X; Z_n) &= p(\epsilon) I(X; Z_n | X\in I) +  (1-p(\epsilon)) I(X; Z_n | X\notin I) + I(Z; 1_{\{X\in I\}})\\
&\leq p(\epsilon) n + I(X; Z) + O(\epsilon \log \epsilon).
\end{align*}
By nonnegativity of mutual information, the first identity above also implies
$$I(X; Z_n) \geq p(\epsilon) I(X; Z_n | X\in I) = p(\epsilon) n.$$  %
Since $I(X;Z)$ is finite by assumption, the combination of the above estimates imply
\begin{align}
I(X; Z_n) = \Theta(\epsilon n), \label{eq:IXZn}
\end{align}
where the asymptotics are understood in the sense that $\epsilon>0$ is fixed and $n$ allowed to increase.

For $x\in I$, let $k(x)$ denote the integer $k\in \{1,\dots, n\}$ such that $x \in (t_{k-1}, t_k]$. Observe that, conditioned on $\{X\in I\}$, the index $k(X)$ is almost surely equal to a function of $X+Z_n$.  This follows since for any $c\in \mathbb{R}$,  the line $\{(x,z) : x + z = c\} \subset \mathbb{R}^2$ intersects a  unique rectangle of the form
$$
(t_{i-1}, t_{i}]\times (s_{i-1}, s_{i}].
$$
Conditioned on $\{X\in I\}$, the distribution of $(X,Z_n)$ is supported on such rectangles by construction, so the claim follows.

With the above observation together with the fact that  $X$ and $Z_n$ are conditionally independent given $\{k(X), X\in I\}$ by construction,   we have 
\begin{align*}
h(X+ Z_n | X\in I) &= h(X+ Z_n | k(X), X\in I) + I(X+ Z_n ; k(X)| X\in I) \\
&\geq h(X | k(X), X\in I) + I(X; k(X)| X\in I) \\
&= h(X |  X\in I) .
\end{align*}
In particular, 
\begin{align*}
h(X+Z_n) &\geq (1-p(\epsilon))h(X+Z_n|X\notin I) +  p(\epsilon) h(X+Z_n |X\in I) \\
&\geq (1-p(\epsilon))h(X+Z_n |X\notin I) +  p(\epsilon) h(X  |X \in I) \\
&= (1-p(\epsilon))h(X+Z |X\notin I) +  O(\epsilon \log \epsilon),
\end{align*}
where the first line follows since conditioning reduces entropy, and the last line follows since $X$ is nearly uniform on $I$ for $\epsilon$ small. 

Now, upper bounding entropy in terms of second moments, we have
\begin{align*}
h(X+Z |X\in I)&\leq \frac{1}{2}\log\left(2\pi e\EE[(X+Z)^2|X\in I] \right) \\
&\leq \frac{1}{2}\log\left(4\pi e( \EE[X^2|X\in I] +\EE[Z^2|X\in I] )  \right)\\
&\leq \frac{1}{2}\log\left(\frac{4\pi e}{p(\epsilon)}( \EE[X^2 ] +\EE[Z^2 ] )  \right).
\end{align*}
So, by finiteness of second moments, 
\begin{align*}
p(\epsilon) h(X+Z |X\in I)&\leq O(\epsilon \log \epsilon).
\end{align*}
Since $I(X+Z; 1_{\{X\in I\}})\leq H(1_{\{X\in I\}}) = O(\epsilon \log \epsilon)$, we put everything together to find 
\begin{align*}
h(X+Z) &= h(X+Z|1_{\{X\in I\}} ) +I(X+Z; 1_{\{X\in I\}}) \\
&\leq (1-p(\epsilon))h(X+Z |X\notin I) + O(\epsilon \log \epsilon)\\
&\leq h(X+Z_n) + O(\epsilon \log \epsilon).
\end{align*}
Combining with \eqref{eq:IXZn} establishes the ansatz, and completes the proof.
\end{proof}

\section{Closing Remarks}\label{sec:closing}
Through the sequence of applications given, we have hopefully convinced the reader that Theorem \ref{thm:GaussianComparisonConstrained} offers significant generality and flexibility.  This flexibility allows for easy modification.  For example, the results hold verbatim over the complex field instead of the real field.  Indeed, this follows directly by considering complex-valued random variables as two-dimensional real random variables, and using the matrix representation of complex numbers to implement the linear transformations\footnote{With a bit more work, which we do not detail here, it can be shown  in the complex setting that it suffices to consider circularly symmetric complex Gaussians in the lower bound of the comparison \eqref{eq:maxEntComparisonConstrained}.}.  By the same logic, the same comparison of Theorem \ref{thm:GaussianComparisonConstrained} can be seen to hold for other matrix algebras over $\mathbb{R}$.

Having said all this, we make no assertion that Theorem \ref{thm:GaussianComparisonConstrained} is a grand unification of all entropy inequalities on Euclidean space.  Indeed, there are several important examples of inequalities that are not obviously subsumed.  Results in \cite{LiuViswanath, GengNair, CourtadeStrongEPI} provide representative examples.  We concede that there may be some clever application of Theorem \ref{thm:GaussianComparisonConstrained} that can   recover some of these results, %
but we do not know of one at the time of this writing.  Thus, at the moment, it seems that Theorem \ref{thm:GaussianComparisonConstrained} may     be another piece in a larger puzzle still wanting to be put together.

\subsection*{Acknowledgement} 
 This work was supported in part by NSF grant CCF-1750430 (CAREER).

\end{document}